\def\hat{\widehat}
\def\phi{\varphi}
\newtheorem{theorem}{Theorem}
\newtheorem{lemma}[theorem]{Lemma}
\newtheorem{corollary}[theorem]{Corollary}
\newtheorem{definition}[theorem]{Definition}
\renewcommand{\a}{a}
\renewcommand{\b}{b}
\newcommand{\la}{<_a}
\newcommand{\lb}{<_b}
\title{ Extension of Additive Valuations to General Valuations on the Existence of EFX}
\author{
Ryoga Mahara\thanks{Research Institute for Mathematical Sciences, Kyoto University, Kyoto, 606-8502, Japan.
E-mail: ryoga@kurims.kyoto-u.ac.jp}
}
\date{\today}
\begin{document}

\maketitle

\begin{abstract}
Envy-freeness is one of the most widely studied notions in fair division.
Since envy-free allocations do not always exist when items are indivisible, several relaxations have been considered.
Among them, possibly the most compelling concept is envy-freeness up to any item (EFX).
We study the existence of EFX allocations for general valuations.
The existence of EFX allocations is a major open problem.
For general valuations, it is known that an EFX allocation always exists (i) when $n=2$ or (ii) when all agents have identical valuations, where $n$ is the number of agents.
it is also known that an EFX allocation always exists when one can leave at most $n-1$ items unallocated.

We develop new techniques and extend some results of additive valuations to general valuations on the existence of EFX allocations.
We show that an EFX allocation always exists (i) when all agents have one of two general valuations or (ii) when the number of items is at most $n+3$.
We also show that an EFX allocation always exists when one can leave at most $n-2$ items unallocated.
In addition to the positive results, we construct an instance with $n=3$ in which an existing approach does not work as it is.

\end{abstract}

\section{Introduction} \label{sec: intro}
Fair division of items among competing agents is a fundamental and well-studied problem in Economics and
Computer Science. 
We are given a set $M$ of $m$ items and a set $N$ of $n$ agents with individual preferences.
Each agent $i$ has a valuation function $v_i: 2^{M} \rightarrow \mathbb{R}_{\ge 0}$ for each subset of items.
The goal is to distribute items among $n$ agents in a {\it fair} and {\it efficient} manner.
In this paper, we consider the {\it indivisible} setting: 
an item cannot be split among multiple agents.
Let an allocation $X=(X_1,X_2,\ldots, X_n)$ denote a partition of $M$ into $n$ bundles such that $X_i$ is allocated to agent $i$.
Several concepts of fairness have been considered in the literature, 
and one of the most well-studied notions of fairness is {\it envy-freeness}.
An allocation $X$ is {\it envy-free} if for any pair of agents $i, j$ we have $v_i(X_i)\ge v_i(X_j)$, i.e., no agent $i$ envies another agent $j$'s bundle.
Unfortunately, envy-free allocations do not always exist when items are indivisible.
We can easily see this even with two players and a single item having positive value for both of them: 
one of the agents has to receive the item and the other agent envies her.
This motivates the study of relaxations of envy-freeness.

The most compelling relaxations of envy-freeness is {\it envy-freeness up to any item (EFX)} introduced by Caragiannis et al.~\cite{caragiannis2019unreasonable}.
An allocation $X$ is {\it EFX} if for any pair of agents $i, j$ and for any $g\in X_j$ we have $v_i(X_i)\ge v_i(X_j\setminus \{g\})$, i.e., no agent $i$ envies another agent $j$ after the removal of {\it any} item in $j$'s bundle.
EFX is regarded as the best analogue of envy-freeness in discrete fair division: 
Caragiannis et al.~\cite{caragiannis2019envy} remarked that {\it ``Arguably, EFX is the best fairness analog of envy-freeness for indivisible items.''}
However, the existence of EFX allocations is known only in a few cases.
As described in ~\cite{caragiannis2019unreasonable}, 
{\it ``Despite significant effort, we were not able to settle the question of whether an EFX allocation always exists (assuming all goods must be allocated), and leave it as an enigmatic open question.''}

For general valuations, i.e., each valuation function $v_i$ is only assumed to be normalized and monotone, 
Plaut and Roughgarden~\cite{plaut2020almost} showed that an EFX allocation always exists (i) when $n=2$ or (ii) when all agents have identical valuations.
Furthermore, it was shown in~\cite{plaut2020almost} that exponentially many value queries may be required to identify EFX allocations even in the case where there are only two agents with identical {\it submodular} valuation functions.
It is not known whether EFX allocations always exist even when $n=3$ for general valuations.

For additive valuations, i.e., each valuation function $v_i$ is normalized, monotone, and {\it additive}\footnote{A valuation function $v_i$ is additive if $v_i(S)=\sum_{g\in S} v_i(\{g\})$ for any $S\subseteq M$.}, 
it is known that an EFX allocation always exists when $n=3$~\cite{chaudhury2020efx} or all agents have one of two valuations~\cite{mahara2020existence}.
It is not known whether EFX allocations always exist even when $n = 4$ for additive valuations.

One of relaxations of EFX is {\it EFX with charity} introduced by Caragiannis et al.~\cite{caragiannis2019envy}.
This is a {\it partial} EFX allocation, where all items need not be allocated to the agents.
Thus some items may be left unallocated.
On the other hand, an allocation is said to be {\it complete} if all items are allocated.
For general valuations, Chaudhury et al.~\cite{chaudhury2020little} showed that there exist a partial EFX allocation and a set of unallocated items $U$ such that no agent envies $U$, and $|U|\le n-1$.
For additive valuations, Caragiannis et al.~\cite{caragiannis2019envy} showed that there exists a partial EFX allocation where every agent receives at least half the value of her bundle in an optimal {\it Nash social welfare} allocation\footnote{This is an allocation that maximizes $\Pi_{i = 1}^n v_i(X_i)$.}.
Quite recently, Berger et al.~\cite{berger2021almost} showed that when $n=4$, there exists an EFX allocation with at most one unallocated item such that no agent envies the unallocated item.
Moreover, they extend their results and existing results in \cite{chaudhury2020efx} and \cite{mahara2020existence} beyond additive valuations to {\it nice cancelable} valuations which is a class including additive, unit-demand, budget-additive, multiplicative valuations, and so on.
\subsection{Our Contributions}
We investigate the existence of EFX allocations for general valuations, i.e., 
the valuation of each agent is not necessarily additive.
To prove the existence of EFX, we iteratively construct an EFX allocation from an existing partial EFX allocation to advance with respect to a certain potential function.
Chaudhury et al.~\cite{chaudhury2020efx} introduced the {\it lexicographic potential function} in order to show that they could advance an allocation while keeping EFX.
We use not only the lexicographic potential function but also a new potential function, which we call {\it partition leximin potential function}.
When we construct a new EFX allocation, which is better than the previous one with respect to the potential function, 
some agent may become worse off than in the previous allocation.
The problem is that such an agent may become to envy other agents, which results in violating EFX.
Our technical contribution is to develop a new technique to avoid such situations (see Section~\ref{sec: tech}).

Using this new technique, we obtain some new results on the existence of EFX.
Our results are described below, and are summarized in Table $1$.
Our first result is for the case where each agent has one of two given valuations.
The following theorem extends the case when all agent have the identical valuations~\cite{plaut2020almost}.
\label{sec: contribution}
\begin{theorem}\label{thm: two}
There exists a complete EFX allocation when each agent has one of two general valuations.
\end{theorem}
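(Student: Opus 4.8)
The plan is to partition the $n$ agents into two groups according to which valuation they hold: let $A$ be the set of agents with valuation $v_a$ and $B$ the set with $v_b$. Within each group the agents are mutually identical, so I can invoke the identical-valuation result of Plaut and Roughgarden as a black box to guarantee EFX among agents of the same type. The entire difficulty is therefore to control \emph{cross-type} envy, i.e. envy between an $A$-agent and a $B$-agent. Following the scheme sketched in the introduction, I would not attempt to build an EFX allocation directly; instead I would maintain a partial EFX allocation together with a pool $P$ of unallocated items that no agent envies, and repeatedly transform it into a ``better'' partial EFX allocation until $P$ becomes empty, at which point the allocation is complete and EFX.

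To measure progress I would use the two potential functions mentioned: the lexicographic potential function of Chaudhury et al.\ and the new partition leximin potential function. The role of the partition leximin potential is precisely to encode the two-type structure: among the identical agents of a single type I want the bundles assigned in a leximin fashion, since by the identical-valuation theorem such an assignment is automatically EFX within that type. Thus the potential should reward allocations that are simultaneously leximin inside $A$, leximin inside $B$, and lexicographically large overall. I would then show that whenever $P\neq\emptyset$ there is a local move --- either handing an item of $P$ to an agent whom nobody envies (a source in the envy graph), or performing a champion-type reallocation that permutes bundles along a short chain of agents --- that strictly increases the chosen potential while preserving partial EFX. Because the potential takes values in a well-ordered set and is bounded, the process must terminate, and it can only terminate when $P=\emptyset$.

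The main obstacle, and the place where the new technique is needed, is the reallocation step. When I reroute bundles to absorb an item of $P$, some agent is necessarily demoted: it surrenders its current bundle and receives a different one, and for general (non-additive) valuations I cannot appeal to additivity to certify that this demoted agent still envies no one after the removal of any single item. I would handle this by exploiting that there are only two valuation types: the demoted agent can always be paired with a bundle drawn from the EFX structure of its own type class, so that the identical-valuation guarantee certifies it is EFX-satisfied toward agents of its type, while the cross-type comparisons are arranged to be monotone along the reallocation chain. Verifying that such a pairing is always available, and that the associated move \emph{strictly} increases the potential rather than merely preserving it, is the crux of the argument; the two-type hypothesis is exactly what keeps the demoted agent's fate controllable, whereas with three or more distinct valuations the chain of forced demotions need not close up, which explains why the argument does not extend directly.
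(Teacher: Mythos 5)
Your high-level skeleton does match the paper's strategy---a partition-leximin potential over the two type classes, iterative improvement of a partial EFX allocation via champion-style reallocations, and termination by well-ordering---but the proposal is a plan rather than a proof, and the two concrete mechanisms you do commit to are both broken. First, you cannot invoke the Plaut--Roughgarden identical-valuations theorem as a black box ``within each type'': that theorem constructs a fresh allocation of a given item set among identical agents, whereas here the within-type bundles are entangled with the cross-type construction, and re-running it inside one class would destroy exactly the cross-type EFX guarantees and the potential progress you are trying to certify. The paper needs no such black box; instead it proves a \emph{semi-EFX} lemma (Lemma~\ref{lem: semi}): if the only EFX envy is among agents of $N_b$, one repeatedly deletes an offending item $h$ from an envied bundle $X_{b_j}$ with $b_j\neq b_0$; monotonicity shows semi-EFX is preserved, the process terminates because the number of allocated items strictly decreases, and crucially the bundles of all of $N_a$ and of $b_0$ are untouched---which is what keeps the potential comparison valid. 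Second, your proposed local move of ``handing an item of $P$ to an agent whom nobody envies'' does not preserve EFX for general valuations: the recipient's augmented bundle may now be EFX-envied, which is precisely why the champion/minimum-preferred-set machinery exists in the first place. The paper never performs that move.

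The step you explicitly defer---``the demoted agent can always be paired with a bundle drawn from the EFX structure of its own type class''---is the entire content of the theorem, and the paper's resolution is quite different from your description. One shows that, absent a PI cycle, the poorest $a$-agent $a_0$ $g$-decomposes the poorest $b$-agent $b_0$ into halves $T_{b_0}, B_{b_0}$; the demoted agent is then $b_0$, and she receives the $v_b$-maximum $Z$ of an \emph{explicit menu} $\{T_{b_0}\cup g,\, U_{a_0},\dots,U_{a_s}\}$, where $U_{a_i}$ is either $X_{a_i}$ or the best proper subset $\hat{X}_{a_i}$ of cardinality $|X_{a_i}|-1$, chosen according to whether $b_0$ envies $a_i$. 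Receiving the best item on the menu is what certifies that $b_0$ EFX-envies no one, with the $\hat{X}_{a_i}$ construction handling the non-additive comparisons you wave at with ``arranged to be monotone.'' Even then the potential does not automatically increase: only in the case $Z=U_{a_0}$ is progress immediate; when $Z=T_{b_0}\cup g$ or $Z=U_{a_r}$ with $r\ge 1$, one must locate a PI cycle through $a_0$ using an unallocated item $g'\in B_{b_0}$, and in the subcase $U_{a_r}=\hat{X}_{a_r}$ one must additionally establish a second champion edge $a_0 \xrightarrow{g''} b_0$ (with $g''\in X_{a_r}\setminus \hat{X}_{a_r}$) to close the cycle $a_0 \xrightarrow{g''} b_0 \xrightarrow{g'} a_0$. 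None of this case analysis, nor any argument for strict potential increase, appears in your sketch, so as written the proposal has a genuine gap at exactly its self-identified crux.
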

It is known that there exists an EFX allocation when each agent has one of two additive valuations~\cite{mahara2020existence}.
Berger et al.~\cite{berger2021almost} extended this result beyond additive valuations to nice cancelable valuations.
In \cite{mahara2020existence} and \cite{berger2021almost}, to prove the result they iteratively construct a Pareto dominating (see Section~\ref{sec: approach}) EFX allocation from an existing partial EFX allocation.
However, such an approach is not likely to work for general valuations.
To prove Theorem~\ref{thm: two}, 
we introduce a new potential function ({\it partition leximin potential function}) and show that for any EFX allocation with at least one unallocated item, one can obtain a new EFX allocation that makes progress with respect to the partition leximin potential function.
This implies that there exists a complete EFX allocation.
More details are discussed in Section~\ref{sec: two}.

Our second result concerns EFX with charity.
As mentioned above, 
it is known that there exist a partial EFX allocation and a set of unallocated items $U$ such that no agent envies $U$, and $|U|\le n-1$~\cite{chaudhury2020little}.
The following theorem strengthens the bound on the number of unallocated items from $n-1$ to $n-2$.
\begin{theorem}\label{thm: n-2}
For general valuations, there exists an EFX allocation $X$ with at most $n-2$ unallocated items.
Moreover, no agent envies the set of unallocated items in $X$.
\end{theorem}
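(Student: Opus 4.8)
The plan is to push the potential-function method of \cite{chaudhury2020little} one step further, shaving a single item off their bound of $n-1$. Call a pair $(X,P)$ a \emph{valid} partial EFX allocation if $X=(X_1,\ldots,X_n)$ partitions $M\setminus P$, $X$ is EFX among the agents, and no agent EFX-envies the pool, i.e. $v_i(X_i)\ge v_i(P\setminus\{g\})$ for all $i\in N$ and all $g\in P$. By \cite{chaudhury2020little} such allocations exist, and there are finitely many, so I would fix one that maximizes the lexicographic potential $\phi(X)$ (sort $v_1(X_1),\ldots,v_n(X_n)$ in nonincreasing order and compare lexicographically, as in \cite{chaudhury2020efx}), breaking ties in favor of the smallest $|P|$. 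Rotating bundles along a directed cycle of the envy graph (edge $i\to j$ when $v_i(X_j)>v_i(X_i)$) strictly raises the value of every agent on the cycle while preserving both EFX and the no-pool-envy condition; since that would increase $\phi$, the maximizer has an acyclic envy graph. The goal is to prove $|P|\le n-2$, so I would assume $|P|\ge n-1$ and derive a contradiction by exhibiting a valid allocation that either increases $\phi$ or keeps $\phi$ while shrinking $P$.

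Since the envy graph is acyclic it has a source $s$, an agent whom no one envies, so $v_i(X_i)\ge v_i(X_s)$ for every $i$. The basic move is to hand $s$ a pool good $g$. If no agent EFX-envies $X_s\cup\{g\}$, then replacing $X_s$ by $X_s\cup\{g\}$ and deleting $g$ from $P$ stays valid, weakly raises $\phi$, and strictly shrinks $P$, contradicting the choice of the maximizer. Otherwise $g$ creates a \emph{champion}: an agent $i$ together with an inclusion-minimal $Z\subseteq X_s\cup\{g\}$ with $g\in Z$ and $v_i(Z)>v_i(X_i)$ (it must contain $g$, since otherwise $i$ would already EFX-envy $X_s$). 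If the champion can be taken to be $s$ itself, reassigning $Z$ to $s$ strictly raises $s$'s value and hence $\phi$, again a contradiction; so in the remaining case every pool good has a champion in $N\setminus\{s\}$. This yields a map from the (at least $n-1$) pool goods into the $n-1$ agents of $N\setminus\{s\}$, and the plan is to exploit this near-complete covering of the agents, where the hypothesis $|P|\ge n-1$ is essential, to force a reassignment that cycles back and contradicts acyclicity or maximality.

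The main obstacle is the cascade set off by reassigning a coveted set $Z_g$ to its champion $i$: agent $i$ strictly improves, but $s$ loses the items of $Z_g\cap X_s$ and the freed bundle $X_i$ must be redistributed, so agents made worse off may begin to EFX-envy another bundle or the shrunken pool, destroying validity. This is exactly where I would invoke the technique of Section~\ref{sec: tech}: instead of reassigning greedily, process the champions in an order dictated by the acyclic envy graph, starting from $s$, so that each freed bundle is absorbed by an agent strictly below it and every repair step moves weakly in the lexicographic order. The crux is to show that, because all $n-1$ agents outside $s$ act as champions, this controlled chain of reassignments cannot terminate without producing a strict lexicographic gain: tracing the chain through $n-1$ agents together with the source must, by a pigeonhole on the $n$ agents combined with acyclicity of the envy graph, either revisit an agent (closing a cycle that we can rotate for a strict $\phi$ increase) or deposit the last freed bundle so that the pool shrinks at no cost to $\phi$.

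Either outcome contradicts the maximality of $(X,P)$, so the maximizer has $|P|\le n-2$. Its defining no-pool-envy condition is precisely the statement that no agent envies the set of unallocated items, which completes the proof.
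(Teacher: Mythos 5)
There is a genuine gap, and it sits exactly where your third paragraph waves at ``the technique of Section~\ref{sec: tech}'': the cascade/repair argument is the entire content of the hard case, and your sketch does not supply it. Structurally, your setup is weaker than the paper's: all of your champions point at the \emph{single} source bundle $X_s\cup\{g\}$, and champions of the same bundle for different pool goods do not chain together, so when you reassign $Z_g$ to its champion $i$ there is no structure dictating who absorbs the freed bundle $X_i$, and your pigeonhole claim (``revisit an agent or deposit the last freed bundle at no cost to $\phi$'') is asserted, not proved --- in particular you never argue that the cascade preserves EFX or moves monotonically in the potential. The paper instead builds a champion \emph{path} $a_n\xrightarrow{g_{n-1}}a_{n-1}\xrightarrow{g_{n-2}}\cdots\xrightarrow{g_1}a_1$ on $n$ distinct agents labeled by $n-1$ \emph{distinct} unallocated items (this is where $|P|\ge n-1$ enters; any repetition closes a PI cycle and Lemma~\ref{lem: PI} finishes), applies Lemma~\ref{lem: decompose} to split each $X_{a_i}$ into half-bundles $T_i,B_i$, and then makes the pivotal move your sketch lacks: the \emph{least important} agent $a_1$ receives her favorite bundle $Z=\max_{a_1}\{T_1\cup g_1,\dots,T_{n-1}\cup g_{n-1},X_{a_2},\dots,X_{a_n}\}$ from a list of bundles that are all EFX-safe (no agent EFX-envies a minimum preferred set), which guarantees $a_1$ envies nobody afterwards even though she may be worse off. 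If $Z\ne T_1\cup g_1$, shifting the sets $T_{j-1}\cup g_{j-1}$ down the path gives $X'\succ_{\rm lex}X$ outright; if $Z=T_1\cup g_1$, the swap deliberately creates the envy edge $a_2\to a_1$ and frees $B_1\ne\emptyset$, restoring at least $n-1$ unallocated items \emph{together with} an envy edge, whereupon case $(2)$ of Lemma~\ref{lem: one edge} produces a Pareto-dominating EFX allocation in which some agent other than $a_1$ strictly gains --- lexicographic progress because $a_1$ is last in the ordering. None of this machinery (the safe-bundle list, the favorite-bundle move for the sacrificed agent, the re-entry into Lemma~\ref{lem: one edge}) can be recovered from your outline, and without it the proof of the $|P|=n-1$ case does not exist.

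Three further technical choices in your proposal would actively break the repair steps even if you tried to carry them out. First, you define the potential by \emph{sorting} $v_1(X_1),\dots,v_n(X_n)$; the paper's $\phi(X)=(v_1(X_1),\dots,v_n(X_n))$ uses a \emph{fixed} agent ordering, and this matters: the hard case sacrifices the agent who is last in the fixed ordering, whereas under a sorted vector, lowering one agent's value can strictly decrease the potential even when another agent improves (and comparing $v_i$-values with $v_j$-values across agents with different valuation functions has no meaning here). Second, you take $Z$ inclusion-minimal for one envious agent; the paper's champion is a \emph{most envious} agent with a \emph{minimum-cardinality} preferred set, which is what guarantees that no agent EFX-envies the reassigned set --- your $Z$ lacks this safety property, so even your ``champion is $s$'' reassignment can create EFX envy from third parties. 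Third, maintaining your no-pool-envy invariant fails as soon as $Z\subsetneq X_s\cup\{g\}$ is reassigned, since $X_s\setminus Z$ returns to the pool and agents may EFX-envy the enlarged pool; the paper does not maintain this invariant at all, but obtains the final ``no agent envies $U$'' claim from Lemma~\ref{lem: progress} (Berger et al.) applied to a lexicographically maximal allocation.
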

Berger et al.~\cite{berger2021almost} showed that for nice cancelable valuations, there exists an EFX allocation $X$ with at most $n-2$ unallocated items.
Theorem~\ref{thm: n-2} extends this results to general valuations.
To prove Theorem~\ref{thm: n-2}, we show that for any EFX allocation with at least $n-1$ unallocated items, one can obtain a new EFX allocation that makes progress with respect to the lexicographic potential function.
This implies that there exists an EFX allocation with at most $n-2$ unallocated items.

We also study the case with a small number of items.
For additive valuations,  Amanatidis et al.~\cite{amanatidis2020multiple} showed that when $m\le n+2$, there exists an EFX allocation.
For general valuations, to the best of our knowledge, non-trivial results are not known.
The following theorem extends the existing results in the sense that it not only increases the number of items, but also makes valuation function general instead of additive.
\begin{theorem}\label{thm: m=n+3}
For general valuations, there exists a complete EFX allocation when $m\le n+3$.
\end{theorem}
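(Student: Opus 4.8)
The plan is to bootstrap from the partial EFX allocation produced by Theorem~\ref{thm: n-2} and exploit the hypothesis $m\le n+3$ to finish allocating the leftover items. Let $X=(X_1,\dots,X_n)$ be an EFX allocation with unallocated set $U$, where $|U|\le n-2$ and no agent envies $U$. If $U=\emptyset$ we are already done, so assume $U\neq\emptyset$ and aim to re-allocate items until nothing is left over while maintaining EFX. The first thing I would extract from the hypothesis is a \emph{structural} consequence: since $U\neq\emptyset$, the allocated part contains at most $m-1\le n+2$ items. Hence, if no bundle is empty, $\sum_i(|X_i|-1)=(\text{allocated})-n\le 2$, so at most two agents hold a size-$2$ bundle (or one agent a size-$3$ bundle) and everyone else holds a singleton. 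This is the payoff of $m\le n+3$: apart from a bounded handful of bundles, every bundle has size at most $1$, and a bundle of size at most $1$ can never be envied up to an item (deleting its unique element empties it). So the only EFX comparisons that can fail involve the at most two multi-item bundles.

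With this in hand, I would run a progress argument with respect to the lexicographic potential function already used for Theorem~\ref{thm: n-2}. Fix $g\in U$; the goal of one step is to either add $g$ to some bundle, or perform an exchange between $U$ and an existing bundle, so that the result stays EFX and strictly improves the potential. Iterating then drives $|U|$ to $0$, and since the allocated count stays $\le n+2$ as long as $U\neq\emptyset$, the small-bundle structure above persists throughout, right up to the final completing move. To select the recipient I would work in the envy graph and use a source/champion argument: hand $g$ to an unenvied agent, or rotate bundles along an alternating path and insert $g$ at its end, so that the reassignment only improves the potential and the only new bundle of size $\ge 2$ is compared against singletons by everyone else.

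The main obstacle, and where all the care is needed, is \emph{recipient-side} envy: after augmenting agent $i$'s bundle to $X_i\cup\{g\}$, agent $i$ must still not envy any other bundle up to an item, and an agent whose bundle was empty (which forces $v_i(U)=0$, since no one envies $U$) must not be pushed into violating EFX against the few multi-item bundles. I would dispatch this by the structural reduction: because all but at most two bundles are singletons and no agent envies $U$, the dangerous comparisons are confined to the $\le 2$ size-$2$ bundles, leaving only finitely many configurations to resolve by choosing among augmenting a bundle, swapping an item back into $U$, or rotating along an envy path; empty bundles are absorbed first, using $v_i(U)=0$ to show the relevant agent is indifferent and can be serviced without loss. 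Proving that in \emph{every} remaining configuration at least one such potential-improving, EFX-preserving move exists is the crux of the argument, and the bounded case analysis afforded by $m\le n+3$ is exactly what makes it tractable.
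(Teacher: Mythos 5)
Your structural observation is sound and matches the paper's Lemma~\ref{lem: 2} in spirit: once at least one item is unallocated, $m\le n+3$ forces all but (at most) two bundles to be singletons, and a singleton can never be EFX-envied. (Your detour through Theorem~\ref{thm: n-2} is unnecessary --- Lemma~\ref{lem: progress} lets one start from an arbitrary partial EFX allocation --- but that is harmless.) The genuine gap is that everything after this observation is deferred, and the deferred part is precisely where the configuration is hard. Concretely, in the only case that survives the easy reductions (no PI cycle in the champion graph $M_X$), Lemma~\ref{lem: 2} shows that \emph{every} 1-agent is envied by someone, while the two non-1-agents are envied by \emph{nobody}. So your basic move, ``hand $g$ to an unenvied agent,'' is forced to target one of the two size-2 bundles; but for general valuations, adding $g$ to a bundle $\{h_1,h_2\}$ creates two-element subsets $\{h_1,g\}$ and $\{h_2,g\}$ that another agent may envy even though she envies neither item alone, so EFX can break even though the recipient was unenvied and nobody envied $U$. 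Conversely, giving $g$ to any envied 1-agent breaks EFX immediately: the envier EFX-envies the new pair after removing $g$. Hence in the critical configuration none of your elementary moves (augment, swap back into $U$, rotate along envy edges) is available, and the ``bounded case analysis'' you appeal to is exactly the unproven content of the theorem.

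The paper's resolution uses machinery your sketch omits: champion edges. The envy edges into the 1-agents together with $g$-champion edges into the two non-1-agents yield a dicycle $C$ containing exactly two champion edges; Lemma~\ref{lem: decompose} splits each 2-bundle into top and bottom half-bundles; then the \emph{lexicographically least important} endpoint $i$ of the champion edges receives her favorite bundle $Z$ among explicit candidates (e.g., $Z=\max_i\{X_{\mathbf{succ}(i)}, T_j\cup g\}$), while bundles are shifted along $C$ so that every other agent weakly improves and the other non-1-agent $j$, which comes before $i$ in the fixed ordering, strictly improves --- in one subcase only after a second stage that uses the freed bottom half $B_i$ as a new unallocated item to build a PI cycle through $j$ and invoke Lemma~\ref{lem: PI}. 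Note the inversion relative to your plan: the serviced agent $i$ is allowed to become \emph{worse} off (which is why she is chosen least important and handed the maximum of the feasible bundles, so she envies nobody afterward), and strict lexicographic progress comes from an agent earlier in the order, not from the recipient of $g$. Your proposal insists that each reassignment ``only improves the potential,'' and no such move exists in general in this configuration; without the champion/decomposition mechanism and the sacrifice-the-least-important-agent device, the crux of the proof is missing.
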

To prove Theorem~\ref{thm: m=n+3}, we also use the lexicographic potential function.

In addition to the above positive results, we study a limitation of the approach using the lexicographic potential function.
We construct an instance with $n=3$ and $m=7$ in which there exists an EFX allocation with one unallocated item such that no progress can be made with respect to the lexicographic potential function.
This shows that Theorem~\ref{thm: n-2} and Theorem~\ref{thm: m=n+3} are the best possibilities in a sence.
\begin{table}[htb]
\caption{Our positive EFX results, where $|U|$ is the number of unallocated items.}
  \begin{tabular}{|l|l|l|} \hline
     Setting&Prior results  &Our results   \\ \hline 
     EFX for one of two valuations&Additive~\cite{mahara2020existence}, Nice cancelable~\cite{berger2021almost}  &General   \\ \hline 
     EFX with charity&
 \noindent\begin{tabular}{@{}l}
 General, $|U|\le n-1$~\cite{chaudhury2020little}\\
Nice cancelable, $|U|\le n-2$~\cite{berger2021almost}
 \end{tabular}
     &General, $|U|\le n-2$ \\ \hline
     EFX for a small number of items&Additive, $m\le n+2$~\cite{amanatidis2020multiple}  &General, $m\le n+3$   \\ \hline 
  \end{tabular}
 
\end{table}
\subsection{Our Techniques}
\label{sec: tech}
We first fix a potential function $\phi$ for all allocations.
For an existing partial EFX allocation, in order to find a new EFX allocation we use the champion graph introduced in \cite{chaudhury2020efx}.
If we have a Pareto improvable cycle (see Definition~\ref{def: PI}) in the champion graph, then we can conclude that there exists an EFX allocation $Y$ Pareto dominating $X$.
That is, we have $v_i(Y_i) \ge v_i(X_i) $ for any agent $i$, and $v_j(Y_j) > v_j(X_j)$ for some agent $j$.
It would imply that $\phi(X) < \phi(Y)$.
Otherwise, it may no longer be possible to Pareto dominate $X$.
Thus, we seek an allocation $Y$ such that some agent $i$ is worse off than in $X$ and every agent other than $i$ is not worse off than in $X$, i.e., $v_i(Y_i) < v_i(X_i)$ and $v_j(Y_j) \ge v_j(X_j)$ for $j \in N\setminus \{i\}$.
We choose such an agent $i$ who is less important with respect to $\phi$. 
In order to preserve EFX, we list bundles that can appear in $Y$.
We then allocate to agent $i$ the best of the bundles in such a list, which is a key ingredient in our construction.
Since agent $i$ receives the most favorite bundles in $Y$, agent $i$ does not envy any agent in $Y$.
We can conclude that $\phi(X) < \phi(Y)$ or the structure of the campion graph in $Y$ is better than in $X$.
In the latter case, we will find a Pareto improvable cycle in the campion graph of $Y$, and obtain a new EFX allocation $Y'$ such that $\phi(X) < \phi (Y')$.
\subsection{Related Work}
Whereas fair division of divisible resources is a classical topic starting from the 1940's~\cite{Steinhaus},
fair division of indivisible items has been actively studied in recent years.
One of the most popular relaxations of envy-freeness is {\it envy-freeness up to one item (EF1)} where no agent envies another agent after the removal of {\it some} item from the other agent’s bundle.
While the existence of EFX allocations is open, it is known that there always exists an EF1 allocations for any number of agents, and it can be computed in polynomial time~\cite{lipton2004approximately}.
There are a lot of studies on EF1 and EFX~\cite{amanatidis2021maximum, caragiannis2019unreasonable, barman2018finding, plaut2020almost, bilo2018almost, caragiannis2019envy, chaudhury2020little, chaudhury2020efx, chaudhury2021improving, berger2021almost, mahara2020existence}.
Another major concept of fairness is {\it maximin share} (MMS), which was introduced by Budish~\cite{budish2011combinatorial}.
It was shown in~\cite{kurokawa2018fair} that MMS allocations do not always exist, and 
there have been several studies on approximate MMS allocations~\cite{budish2011combinatorial, bouveret2016characterizing, amanatidis2017approximation, barman2017approximation, kurokawa2018fair, ghodsi2018fair, garg2018approximating, garg2020improved}.
In addition, study on finding {\it efficient} fair allocations has attracted attention.
{\it Pareto-optimality} is a major notion of efficiency.
Caragiannis et al.~\cite{caragiannis2019unreasonable} showed that any allocation that has maximum Nash social welfare is guaranteed to be Pareto-optimal and EF1.
Unfortunately, finding an allocation with the maximum Nash social welfare is APX-hard~\cite{lee2017apx}.
There are several studies on approximation algorithms for maximizing Nash social welfare~\cite{cole2018approximating, cole2017convex, chaudhury2018fair, anari2017nash, anari2018nash, barman2018finding, garg2018approximating, garg2020approximating, li2021constant}.

There are many real-world scenarios where items or resources need to be divided fairly, e.g., taxi fare division, rent division, task distribution, and so on.
Spliddit (www.spliddit.org) is a fair division website, which offers a fair solution for the division of rent, goods, and credit \cite{goldman2015spliddit}.
This website implements mechanisms for users to log in, define what is to be divided, enter their valuations, and demonstrate fair division.
Since its launch in 2014, there have been several thousands of users~\cite{caragiannis2019unreasonable}.
For more details on Spliddit, we refer to the reader to~\cite{goldman2015spliddit, plaut2020almost}.
Another fair division application is {\it Course Allocation} used at the Wharton School at the University of Pennsylvania to fairly allocate courses among students~\cite{plaut2020almost, budish2017course}.
\subsection{Organization}
In Section~\ref{sec: pre}, we present the model, denote some basic notions introduced by \cite{chaudhury2020efx, chaudhury2020little}, and prove some useful lemmas.
In Section~\ref{sec: n-2}, we consider EFX with charity for general valuations, and prove Theorem~\ref{thm: n-2}.
In Section~\ref{sec: two}, we consider the setting with only one of two types of general valuations, and prove Theorem~\ref{thm: two}.
In Section~\ref{sec: m=n+3}, we study on the setting with a small number of items, and prove Theorem~\ref{thm: m=n+3}.
In Section~\ref{sec: limit}, we construct an instance with $n=3$ and $m=7$ that shows a certain limitations of the approach of the lexicographic potential function.

\section{Preliminaries}
\label{sec: pre}
Let $N=\{1,\dots,n \}$ be a set of $n$ agents and $M$ be a set of $m$ items.
In this paper, we assume that items are indivisible: an item may not be split among multiple agents.
Each agent $i \in N $ has a valuation function $v_i : 2^M \rightarrow \mathbb{R}_{\ge 0}$.
We assume that (i) any valuation function $v_i$ is {\it normalized}: $v_i(\emptyset) = 0$ and (ii) it is {\it monotone}:  $S\subseteq T$ implies $v_i(S) \le v_i(T)$ for any $S,T \subseteq M$.

To simplify notation, we denote $[k]$ by $\{1,\dots,k\}$, write $v_i(g)$ instead of $v_i(\{g\})$ for $g\in M$, and use $S\setminus g, S\cup g$ instead of $S\setminus \{g\}, S\cup \{g\}$, respectively.
We also denote $S <_i T$ instead of $v_i (S) < v_i (T)$. 
In a similar way, we use the symbols $>_i, \le_i, $ and $\ge_i$.

For $M' \subseteq M$, an {\it allocation} $X=(X_1,X_2,\ldots, X_n)$ on $M'$ is a partition of $M'$ into $n$ disjoint subsets, where $X_i$ is the {\it bundle} given to agent $i$. 
We say that an allocation $X=(X_1,X_2,\ldots, X_n)$ on $M'$ is {\it complete} if $M' = M$.
Otherwise, we say that an allocation $X$ is {\it partial}.

Given an allocation $X$, we say that agent $i$ {\it envies} a set of items $S$ if $X_i <_i S$. 
We say that agent $i$ {\it envies} agent $j$ if $i$ envies $X_j$.
We say that agent $i$ {\it EFX envies} a set of items $S$ if there exists some $h\in S$ such that $i$ envies $S\setminus h$.
We say that agent $i$ {\it EFX envies} agent $j$ if $i$ EFX envies $X_j$.
Note that if $i$ EFX envies $j$ then $i$ envies $j$, but not vice versa.
An allocation $X$ is called {\it envy-free} if no agent envies another.
An allocation $X$ is called {\it EFX} if no agent EFX envies another.

An instance $I$ is a triple $\langle N, M, \mathcal{V} \rangle $, where $\mathcal{V}=\{v_1,\dots,v_n\}$ is a set of valuation functions.
We use an assumption on instances considered in \cite{chaudhury2020efx}.
\begin{definition} An instance $I$ is {\rm non-degenerate} if for any $i \in N$ and $S,T \subseteq M$, 
\[S\neq T \Rightarrow v_i(S) \neq v_i(T).\]
\end{definition}
We can show that in order to prove the existence of an EFX allocation, we may assume w.l.o.g.~that instances are non-degenerate.
This assumption was considered for additive valuations in \cite{chaudhury2020efx}, and we can easily extend it for general valuations.
More details are presented in Appendix~\ref{sec: non-degenerate}.
In what follows, {\it we only deal with non-degenerate instances}.

\subsection{Overall Approach}\label{sec: approach}
All of our results on the existence of EFX can be viewed in a unified framework as follows:
we first fix an appropriate potential function $\phi$ on all allocations.
We then show that given any partial EFX allocation $X$, one can construct a new EFX allocation $Y$ that makes progress with respect to the potential function, i.e., $\phi(X) < \phi(Y)$.
Since there are finitely many allocations, there must exist a complete EFX allocation.
One of natural potential functions is {\it social welfare}.
Given an allocation $X$, we denote the {\it social welfare} of $X$ by $\varphi(X)=\sum_{i\in N} v_i(X_i)$. 
A typical notion to progress the social welfare is Pareto domination.
Given two allocations $X,Y$, we say that $Y$ {\it Pareto dominates} $X$ if $Y_i\ge_i X_i$ for all $i\in N$, and $Y_j >_j X_j$ for some agent $j\in N$.
Clearly, if $Y$ Pareto dominates $X$, then $\varphi(Y)>\varphi(X)$.
Chaudhury et al.~\cite{chaudhury2020efx} have shown that there does not always exist a Pareto-dominating EFX allocation when $n=3$ for additive case.
To overcome this barrier they introduce a {\it lexicographic potential function}, which we also use to prove Theorems~\ref{thm: n-2} and \ref{thm: m=n+3}.
In addition, to prove Theorem~\ref{thm: two}, we use a new potential function.
More detail on each potential function is presented in Sections~\ref{sec: n-2} and \ref{sec: two}.
\subsection{Minimum Preferred Set and Most Envious Agent}
Most envious agent is a basic notion, introduced in \cite{chaudhury2020little}.
Consider an allocation $X=(X_1,X_2,\ldots, X_n)$ and a set $S \subseteq M$.
For an agent $i$ such that $S >_i X_i$, we define a {\it minimum preferred set} $P_X(i,S)$ of agent $i$ for $S$ with respect to allocation $X$ as a smallest cardinality subset $S'$ of $S$  such that $S' >_i X_i$. 
Define $\kappa _X(i,S)$ by
\begin{align*}
  \kappa _X(i,S)=
  \left\{
    \begin{array}{ll}
      |P_X(i,S)| & {\rm if}~S >_i X_i,\\
      +\infty & {\rm otherwise}.
    \end{array}
  \right.
\end{align*}
Let $\kappa_X(S) = \min _{i \in N}  \kappa _X(i,S)$.
We define $A_X(S)$ for a set $S$ as the set of agents with the smallest values of $\kappa _X(i,S)$, i.e., 
\[A_X(S) = \{ i\in N \mid S >_i X_i ~{\rm and}~ \kappa _X(i,S)= \kappa_X(S) \}.\]
We call $A_X(S)$ the set of {\it most envious agents}.
Note that if agent $i$ is a most envious agent for $S$, it holds that $i$ envies $P_X(i,S)$ and no agent EFX envies $P_X(i,S)$.
When it is clear from the context, we abbreviate $P_X(i,S)$ as $P$.
\subsection{Champions and Champion Graph}
In order to find a new EFX allocation from the existing partial EFX allocation, 
Champions and Champion graph are important notions, introduced in \cite{chaudhury2020efx}.
Let $X$ be a partial allocation on $M'\subsetneq M$ and let $g \in M\setminus M'$ be an unallocated item.
For agents $i$ and $j$ (possibly $i=j$), we say that $i$ $g$-{\it champions} $j$ if $i$ is a most envious agent for $X_j\cup g$.
Then, we also call $i$ a $g$-{\it champion} of $j$.
When $i$ is a most envious agent for $X_i\cup g$, we call $i$ a {\it self $g$-champion}.
Note that every agent $j$ has a $g$-champion.
Indeed, since instances are non-degenerate, and valuations are monotone, we have $X_j \cup g >_j X_j$.
That is, since $j$ envies $X_j \cup g $, there exists at least one most envious agent for $X_j \cup g$.

We say that $i$ $g$-{\it decomposes} $j$ if $i$ $g$-champions $j$, and $\{g\} \subsetneq P \subsetneq X_j\cup g$, where $P$ is a minimum preferred set of $i$ for $X_j\cup g$.
When $i$ $g$-decomposes $j$, we can decompose $X_j$ into $P\setminus g$ and $(X_j\cup g)\setminus P$.
If there is no ambiguity, then $T_j=P\setminus g$ and $B_j=(X_j\cup g)\setminus P$ are called {\it top and bottom half-bundles} of $X_j$, respectively.
The following lemma illustrates a typical situation such that $i$ $g$-decomposes $j$.
\begin{lemma}\label{lem: decompose}
If $i$ $g$-champions $j$, $i$ dose not envy $j$, and both $i$ and $j$ are not self $g$-champions, then $i$ $g$-decomposes $j$.
\end{lemma}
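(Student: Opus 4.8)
The plan is to unpack the definition of $g$-decomposition and verify the two proper containments $\{g\} \subsetneq P \subsetneq X_j \cup g$ directly, where I write $P := P_X(i, X_j \cup g)$ for the minimum preferred set of $i$. Since $i$ $g$-champions $j$, by definition $i \in A_X(X_j \cup g)$, so $P$ is well defined, $v_i(P) > v_i(X_i)$, and $|P| = \kappa_X(i, X_j \cup g) = \kappa_X(X_j \cup g)$. The idea is that each of the three hypotheses---$i$ does not envy $j$, $i$ is not a self $g$-champion, $j$ is not a self $g$-champion---is used to eliminate exactly one degenerate possibility for $P$.

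First I would establish the left containment $\{g\} \subsetneq P$ in two steps. To see $g \in P$: if instead $g \notin P$, then $P \subseteq X_j$, so monotonicity gives $v_i(P) \le v_i(X_j)$, and since $i$ does not envy $j$ we have $v_i(X_j) \le v_i(X_i)$; together $v_i(P) \le v_i(X_i)$, contradicting that $P$ is a preferred set of $i$. To see $P \neq \{g\}$: if $P = \{g\}$, then $v_i(g) > v_i(X_i)$, so $\{g\}$ is a singleton preferred subset of $X_i \cup g$. Because no preferred set is empty (any $S'$ with $v_i(S') > v_i(X_i) \ge 0$ must be nonempty), we get $\kappa_X(i, X_i \cup g) = 1$, which is the minimum possible value of $\kappa_X(\cdot, X_i \cup g)$; hence $i \in A_X(X_i \cup g)$, i.e.\ $i$ is a self $g$-champion, contradicting the hypothesis. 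Combining the two steps yields $\{g\} \subsetneq P$.

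For the right containment $P \subsetneq X_j \cup g$ I would argue by extremality of cardinality. Suppose for contradiction that $P = X_j \cup g$. Then $\kappa_X(X_j \cup g) = |P| = |X_j \cup g|$, which is the largest value any $\kappa_X(\,\cdot\,, X_j \cup g)$ can take, since every preferred set is a subset of $X_j \cup g$. By monotonicity and non-degeneracy, $v_j(X_j \cup g) > v_j(X_j)$, so $j$ envies $X_j \cup g$ and therefore $\kappa_X(j, X_j \cup g) \le |X_j \cup g| = \kappa_X(X_j \cup g)$. As $\kappa_X(X_j \cup g)$ is by definition the minimum of $\kappa_X(\,\cdot\,, X_j \cup g)$ over all agents, this forces equality, so $j \in A_X(X_j \cup g)$, meaning $j$ is a self $g$-champion---contradicting the hypothesis. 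Hence $P \neq X_j \cup g$, giving $P \subsetneq X_j \cup g$, and the two containments together say exactly that $i$ $g$-decomposes $j$.

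I expect the main (mild) obstacle to be the right containment: the other steps are immediate once the definitions are expanded, but concluding $P \neq X_j \cup g$ requires the observation that $P = X_j \cup g$ would make $\kappa_X(X_j \cup g)$ attain its maximum possible value, so that any other agent who merely \emph{envies} $X_j \cup g$ (in particular $j$) automatically becomes a most envious agent. Translating ``$j$ is not a self $g$-champion'' into the statement $j \notin A_X(X_j \cup g)$ and then reading off the cardinality contradiction is the one place where care is needed; the rest is routine application of monotonicity and the definition of the minimum preferred set.
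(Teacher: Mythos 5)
Your proof is correct and follows essentially the same route as the paper's: you show $g \in P$ via monotonicity and the no-envy hypothesis, rule out $P = \{g\}$ via ``$i$ is not a self $g$-champion,'' and rule out $P = X_j \cup g$ via ``$j$ is not a self $g$-champion.'' The only difference is that you spell out details the paper leaves implicit (that preferred sets are nonempty, so $\kappa = 1$ is automatically minimal, and that $P = X_j \cup g$ makes $\kappa_X(X_j \cup g)$ maximal, so $j$'s envy of $X_j \cup g$ forces $j \in A_X(X_j \cup g)$), which is a welcome tightening rather than a new idea.
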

\begin{proof}
By the assumption, we have $X_j <_i X_i <_i X_j \cup g$.
Let $P$ be a minimum preferred set of $i$ for $X_j \cup g$.
If $g\notin P$, then $P\subseteq X_j$, and by the monotonicity, 
we have $P \le_i X_j <_i X_i$.
This contradicts the definition of $P$.
Thus, $g \in P$.
If $P=\{g\}$, then $\kappa_{X}(i, X_i\cup g)=1$, and hence it contradicts that $i$ is not a self $g$-champion.
Thus, $\{g\} \subsetneq P$.
Furthermore if $P=X_j \cup g$, then $\kappa_X(X_j \cup g)=|P|$ and hence it contradicts that $j$ is not a self $g$-champion.
Therefore $\{g\} \subsetneq P \subsetneq X_j \cup g$, and thus $i$ $g$-decomposes $j$.
\end{proof}
\begin{definition}
The {\em champion graph} $M_{X}=(N,E)$ with respect to allocation $X$ is a labeled directed multi-graph.
The vertices correspond to the agents, and $E$ consists of the following two types of edges:
\begin{enumerate}
\item Envy edges: $i \rightarrow j$ iff $i$ envies $j$.
\item Champion edges: $i \xrightarrow{g} j$ iff $i$ $g$-champions $j$, where $g$ is an unallocated item.
\end{enumerate}
\end{definition}
Envy graph which consists of only envy edges is introduced in~\cite{lipton2004approximately}.
The original champion graph considered in~\cite{chaudhury2020efx} consists of only champion edges.
Our definition of champion graph combines these two notions for convenience.
Recently, Berger et al.~\cite{berger2021almost} denote the generalized champion graph that contains more additional edges.
For convenience, $i \rightarrow j$ and $ i \xrightarrow{g} j$ are sometimes denoted by $i \xrightarrow{\emptyset} j$ and $ i \xrightarrow{\{g\}} j$, respectively.
Berger et al.~\cite{berger2021almost} denote the notion of Pareto improvable cycle, which is very useful in our argument.
\begin{definition}\label{def: PI}
A cycle $C=a_1 \xrightarrow{H_1} a_2 \xrightarrow{H_2} \cdots \xrightarrow{H_{k-1}} a_k \xrightarrow{H_k} a_1$ in $M_X$ is called {\em Pareto improvable} $(\mathbf{PI})$ if for every $i,j\in [k]$ we have $H_i \cap H_j=\emptyset$, where $H_i$ is an empty set or a singleton of an unallocated item.
\end{definition}
The following lemma shows that if we have a Pareto improvable cycle in $M_X$, then there exists an allocation $Y$ that Pareto dominates $X$ while keeping EFX.
\begin{lemma}(Berger et al.~\cite{berger2021almost})\label{lem: PI}
Let $X$ be an allocation.
If $M_X$ contains a Pareto improvable cycle, then there exists an allocation $Y$ Pareto dominating $X$ such that for any $i, j \in N$, 
if $i$ does not $EFX$ envy $j$ in $X$, then neither in $Y$.
In particular, if $X$ is an EFX allocation, then so is $Y$.
Furthermore, every agent $i$ along the cycle satisfies $X_i <_i Y_i$.
\end{lemma}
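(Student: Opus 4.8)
The plan is to exhibit $Y$ explicitly by a cyclic reallocation along the given $\mathbf{PI}$ cycle $C = a_1 \xrightarrow{H_1} a_2 \xrightarrow{H_2} \cdots \xrightarrow{H_{k-1}} a_k \xrightarrow{H_k} a_1$, leaving every agent off the cycle untouched. For each edge $a_\ell \xrightarrow{H_\ell} a_{\ell+1}$ (indices taken cyclically) I would hand the tail $a_\ell$ a minimum preferred set of $a_\ell$ for $S_\ell := X_{a_{\ell+1}} \cup H_\ell$; that is, set $Y_{a_\ell} := P_X(a_\ell, S_\ell)$, and set $Y_k := X_k$ for every $k \notin \{a_1, \dots, a_k\}$. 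Note $P_X(a_\ell, S_\ell)$ is well defined in both cases, since the tail envies $S_\ell$ whether the edge is a champion edge or a (pure) envy edge. The intuition is that each agent steals from its successor, together with the edge's unallocated item if any, exactly a smallest slice it already prefers to its current bundle, so every cycle agent strictly gains while the freed remainder $S_\ell \setminus Y_{a_\ell}$ is simply left unallocated.

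I first check that $Y$ is a genuine allocation. Since $Y_{a_\ell} \subseteq X_{a_{\ell+1}} \cup H_\ell$, two reassigned bundles $Y_{a_\ell}, Y_{a_{\ell'}}$ can overlap only through their $X$-parts or their $H$-parts. The $X$-parts lie in $X_{a_{\ell+1}}$ and $X_{a_{\ell'+1}}$, which are disjoint because the cycle visits distinct agents; the $H$-parts are disjoint precisely by the defining property $H_i \cap H_j = \emptyset$ of a $\mathbf{PI}$ cycle; and an $H$-part never meets an $X$-part because each $H_\ell$ is unallocated. The reassigned bundles are likewise disjoint from the untouched $Y_k = X_k$. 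For Pareto domination, a minimum preferred set satisfies $X_{a_\ell} <_{a_\ell} P_X(a_\ell, S_\ell) = Y_{a_\ell}$ by definition, so every cycle agent strictly improves while off-cycle agents are unchanged; this yields $Y_i \ge_i X_i$ for all $i$ with strict inequality exactly along the cycle, which is simultaneously the Pareto-domination claim and the final "$X_i <_i Y_i$" claim.

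The crux is EFX-preservation, whose engine is the following safety property of minimum preferred sets: if $a_\ell$ is a most envious agent for $S_\ell$, then no agent EFX-envies $P := P_X(a_\ell, S_\ell)$. Indeed, here $|P| = \kappa_X(S_\ell)$; if some agent $b$ envied $P \setminus h$ for some $h \in P$, then $P \setminus h \subseteq S_\ell$ would be a strictly smaller preferred subset, giving $\kappa_X(b, S_\ell) \le |P| - 1 < \kappa_X(S_\ell)$ and contradicting minimality. Since $Y_b \ge_b X_b$ for every $b$ by Pareto domination, this "no one envies $P \setminus h$" persists with $Y_b$ in place of $X_b$, so no agent EFX-envies $Y_{a_\ell}$ even under $Y$. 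Hence for a target $j$ on the cycle no new EFX-envy can arise at all, while for an off-cycle target $j$ we have $Y_j = X_j$ and EFX-envy toward the fixed set $X_j$ can only weaken as the envier's value weakly rises; in either case, $i$ not EFX-envying $j$ in $X$ forces the same in $Y$, and the "in particular" statement for EFX allocations follows.

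The step I expect to be the main obstacle is making the safety property apply to \emph{every} edge, and in particular to the envy edges (those with $H_\ell = \emptyset$). Across a champion edge the tail $a_\ell$ is by definition a most envious agent for $X_{a_{\ell+1}} \cup H_\ell$, so the argument above applies verbatim; but across a pure envy edge $a_\ell$ is only known to envy $a_{\ell+1}$, not to be a most envious agent for $X_{a_{\ell+1}}$, and then its minimum preferred set could in principle be EFX-envied by a strictly more envious agent. Resolving this — either by reassigning the whole successor bundle and arguing preservation in the weaker shifted form that still yields global EFX, or by first promoting the envy edge to a most-envious (champion-type) relationship before carving the slice — is the delicate point on which the per-pair preservation statement genuinely hinges, and is where I would concentrate the careful case analysis.
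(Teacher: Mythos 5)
Your construction is exactly the paper's (each cycle agent $a_\ell$ receives $P_X(a_\ell, X_{a_{\ell+1}}\cup H_\ell)$, everyone else keeps her bundle), and your arguments for well-definedness of the allocation, Pareto domination, strict improvement along the cycle, and the champion-edge safety property via minimality of $\kappa_X$ are all correct. But the submitted proof has a genuine gap, which you flag yourself: EFX-preservation across \emph{envy} edges is never established. What you are missing is a one-line reduction that closes the case that actually matters. When $H_\ell=\emptyset$, we have $P:=P_X(a_\ell, X_{a_{\ell+1}})\subseteq X_{a_{\ell+1}}$, so for any $h\in P$ we get $P\setminus h\subseteq X_{a_{\ell+1}}\setminus h$; by monotonicity, any agent $b$ with $X_b <_b P\setminus h$ also satisfies $X_b <_b X_{a_{\ell+1}}\setminus h$, i.e.\ $b$ already EFX envied $a_{\ell+1}$ in $X$. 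In particular, if $X$ is EFX, no agent EFX envies $P$, and combined with your champion-edge argument and $Y_b \ge_b X_b$ for all $b$, no EFX envy exists in $Y$ at all. This proves the ``in particular'' clause directly --- no promotion of envy edges to most-envious relationships and no whole-bundle reassignment is needed.

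That said, your suspicion about the fully general per-pair statement is well founded: for a non-EFX $X$, the reduction above only shows that EFX envy toward the envy-edge tail $a_\ell$ in $Y$ is inherited from EFX envy toward the head $a_{\ell+1}$ in $X$, so the envious \emph{pair} can shift one step along the cycle, and the literal claim ``$i$ does not EFX envy $j$ in $X$ implies the same in $Y$'' is not secured for the fixed pair $(i,a_\ell)$. Neither of your proposed repairs fixes this: reassigning the whole bundle $X_{a_{\ell+1}}$ leaves it EFX-envied by the same agent wherever it lands, and the most envious agent for $X_{a_{\ell+1}}$ need not sit on a closable cycle. The paper's own proof (which defers to Berger et al.\ but follows your construction) glosses the very same point, asserting that no agent EFX envies $P_i$ ``by the definition of minimum preferred set,'' which is justified only when the tail is a most envious agent; and in every application in the paper the allocation is EFX, or semi-EFX with the envy-edge heads lying in $N_a$, so that no agent EFX envies the relevant head bundles and the reduction closes the argument. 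The clean resolution is therefore to prove the lemma under the hypothesis that no agent EFX envies $X_{a_{\ell+1}}$ for each envy edge of the cycle (which subsumes the EFX case and all of the paper's uses), rather than to pursue the case analysis you sketch at the end.
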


\begin{corollary}\label{cor: PI}
Let $X$ be an EFX allocation.
If $M_X$ contains an envy-cycle\footnote{envy-cycle is a dicycle in $M_X$ composed of only envy edges.}, a self $g$-champion, or a cycle composed of envy edges and at most one champion edge, then there exists an EFX allocation $Y$ that Pareto dominates $X$.
\end{corollary}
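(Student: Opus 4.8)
The plan is to derive all three cases from Lemma~\ref{lem: PI}: it suffices to show that in each case $M_X$ contains a Pareto improvable cycle, since then the lemma directly produces an EFX allocation $Y$ that Pareto dominates $X$ (indeed, $Y$ Pareto dominates $X$ and preserves EFX). So the whole task reduces to verifying Definition~\ref{def: PI} for the cycle at hand, namely that the labels along it are pairwise disjoint, recalling that an envy edge carries the label $\emptyset$ and a champion edge $\xrightarrow{g}$ carries the singleton $\{g\}$.

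I would handle the three structures in turn. If $M_X$ has an envy-cycle, every edge label is $\emptyset$, so the disjointness condition holds vacuously and the cycle is PI. If $M_X$ has a self $g$-champion, the corresponding self-loop $i \xrightarrow{g} i$ is a length-one cycle with a single edge; since the PI condition only constrains the labels of distinct edges, it is satisfied vacuously, so the self-loop is PI. Finally, for a cycle made of envy edges and at most one champion edge, every envy label $\emptyset$ is disjoint from all others and there is no second champion edge to clash with the unique singleton label, so again all labels are pairwise disjoint and the cycle is PI. In each case, applying Lemma~\ref{lem: PI} finishes the proof. (Formally the last case subsumes the first two, but it is cleanest to treat them separately.)

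The step I expect to need the most care is the self $g$-champion case, for two reasons. First, one must read Definition~\ref{def: PI} as imposing disjointness only on the labels of \emph{distinct} edges, otherwise a single champion self-loop would fail the test $H_1 \cap H_1 = \emptyset$. Second, I would want to confirm that Lemma~\ref{lem: PI} genuinely applies to a length-one cycle: there the construction assigns $Y_i = P_X(i, X_i \cup g) \subseteq X_i \cup g$ and leaves all other bundles unchanged, which is a valid partial allocation because $g$ was unallocated, and $Y_i >_i X_i$ holds by the definition of the minimum preferred set, so $Y$ indeed Pareto dominates $X$ while remaining EFX.
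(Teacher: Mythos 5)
Your proof is correct and matches the paper's (implicit) argument: the paper states Corollary~\ref{cor: PI} without proof precisely because each of the three structures is a Pareto improvable cycle to which Lemma~\ref{lem: PI} applies directly, which is exactly the verification you carry out. Your side remarks are also sound --- Definition~\ref{def: PI} must indeed be read as requiring disjointness only for distinct edge labels (otherwise no champion edge could ever lie on a PI cycle, contradicting the paper's repeated use of such cycles), and a self $g$-champion loop is treated as a length-one PI cycle throughout the paper, e.g.\ in the proof of Lemma~\ref{lem: one edge}.
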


\section{Existence of EFX with at most $n-2$ unallocated items}
\label{sec: n-2}
In this section, we prove Theorem~\ref{thm: n-2}.
We use a lexicographic potential function as in \cite{chaudhury2020efx}.
Recall that $N=\{1,\dots, n\}$.
For an allocation $X$, the {\it lexicographic potential function} $\phi(X)$ is defined as the vector $(v_{1}(X_{1}),\dots,v_{n}(X_{n}))$.
Intuitively, agent $1$ is the most important agent and agent $n$ is the least important agent in $N$.
\begin{definition}\label{def: lexi}
For two allocations $X,Y$, 
We denote $Y \succ_{\rm lex} X$ if $\phi(Y)$ is lexicographically larger than $\phi(X)$, i.e., for some $k\in N$, we have that $Y_{j} =_{j} X_{j}$ for all $1\le j < k$, and $Y_{k} >_{k} X_{k}$.
\end{definition}
Note that if $Y$ Pareto dominates $X$ then $Y \succ_{\rm lex} X$, but not vice versa.
The following basic lemma is shown in \cite{berger2021almost}, which we also use.
\begin{lemma}\label{lem: progress}(Berger et al.~\cite{berger2021almost})
If for every partial EFX allocation $X$ with $k$ unallocated items, there exists an EFX allocation $Y$ such that $Y \succ_{\rm lex} X$, then there exists an EFX allocation with at most $k-1$ unallocated items.
Moreover, no agent envies the set of $k-1$ unallocated items.
\end{lemma}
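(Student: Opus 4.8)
The plan is to run a finiteness-plus-extremality argument on the lexicographic potential $\phi$. Since there are only finitely many allocations and the empty allocation is trivially EFX, the set of EFX allocations is finite and nonempty, so it has a $\succ_{\rm lex}$-maximal element $X^{*}$. I would then show that this single allocation $X^{*}$ witnesses both conclusions at once: no agent envies its set of unallocated items $U^{*}$, and $|U^{*}|\le k-1$.

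First I would establish the \emph{moreover} part, that no agent envies the pool $U^{*}$, and this is where the most-envious-agent machinery enters. Suppose toward a contradiction that some agent envies $U^{*}$. Let $i$ be a most envious agent for $U^{*}$ and let $P=P_{X^{*}}(i,U^{*})$ be its minimum preferred set, so $P\subseteq U^{*}$ and $P>_{i}X^{*}_{i}$. Form $Y$ by assigning $P$ to agent $i$ and returning $X^{*}_{i}$ to the pool, leaving every other bundle unchanged. By the defining property of the most envious agent recorded in the text, no agent EFX envies $P$; since all other bundles are unchanged and agent $i$'s value only increased (so $i$ cannot EFX envy anyone, as $Y_{i}=P>_{i}X^{*}_{i}\ge_{i}X^{*}_{j}\setminus h$ for all $j\neq i$ and $h$), the allocation $Y$ is EFX. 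Because $v_{j}(Y_{j})=v_{j}(X^{*}_{j})$ for all $j\neq i$ while $v_{i}(Y_{i})>v_{i}(X^{*}_{i})$, we get $Y\succ_{\rm lex}X^{*}$, contradicting maximality.

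Second, I would show $|U^{*}|\le k-1$. If instead $X^{*}$ had at least $k$ unallocated items, then $X^{*}$ is a partial EFX allocation to which the progress hypothesis applies, producing an EFX allocation $Y\succ_{\rm lex}X^{*}$ and again contradicting the $\succ_{\rm lex}$-maximality of $X^{*}$. Hence $X^{*}$ has at most $k-1$ unallocated items, and by the first part no agent envies them, which is exactly the claimed conclusion.

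The main obstacle is the precise quantification in the hypothesis: it is phrased for allocations \emph{with $k$ unallocated items}, whereas the extremal $X^{*}$ could a priori have strictly more. The danger is that lexicographic progress might be \emph{bought} by leaving additional items unallocated, so a lex-larger witness $Y$ need not stay within any fixed bound on the pool size. I would reconcile this in one of two ways. The clean route is to invoke the hypothesis in the form actually supplied by the champion-graph construction, namely that progress is available for \emph{every} EFX allocation with \emph{at least} $k$ unallocated items (this is the form stated for Theorem~\ref{thm: n-2}); then the maximal $X^{*}$ cannot have $\ge k$ unallocated items, since that would directly contradict maximality regardless of the witness's pool size. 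Alternatively, one restricts attention to the (nonempty, by prior results such as \cite{chaudhury2020little}) class of EFX allocations with at most $k$ unallocated items and takes $X^{*}$ maximal there, using that the witnesses produced by Lemma~\ref{lem: PI} and the reallocation above only allocate pool items and hence never increase the number of unallocated items, so the witness remains in the class. Handling this count bookkeeping carefully is the crux; the rest is the routine extremal argument above.
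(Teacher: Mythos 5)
Your main argument is correct and is essentially the standard proof of this lemma; note that the paper itself gives no proof here but imports the statement from Berger et al.~\cite{berger2021almost}, whose argument is exactly your extremal scheme: take a $\succ_{\rm lex}$-maximal EFX allocation $X^{*}$ over the finite, nonempty set of all (partial) EFX allocations, kill envy toward the pool by the most-envious-agent swap (give $i$ its minimum preferred set $P\subseteq U^{*}$, return $X^{*}_{i}$ to the pool; EFX survives since no agent EFX envies $P$ and $i$ strictly improves, so $\phi$ strictly increases lexicographically, contradicting maximality), and then invoke the progress hypothesis to bound $|U^{*}|$. You were also right to flag the quantification issue, and your route (a) is the correct resolution: the hypothesis is meant, and is in fact what the paper verifies, for every EFX allocation with \emph{at least} $k$ unallocated items --- immediately after the lemma the paper reduces Theorem~\ref{thm: n-2} to progress for allocations with at least $n-1$ unallocated items, and in the $m\le n+3$ setting $k=1$ covers all partial allocations. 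Under that reading the globally maximal $X^{*}$ cannot have $\ge k$ unallocated items, and your first step delivers the ``moreover'' clause.

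Your fallback route (b), however, rests on a false claim: the witnesses do \emph{not} preserve an upper bound on the pool size. The most-envious swap replaces $X^{*}_{i}$ by $P$ and returns $X^{*}_{i}$ to the pool, changing the number of unallocated items by $|X^{*}_{i}|-|P|$, which can be positive; likewise, the allocation produced from a PI cycle via Lemma~\ref{lem: PI} gives each agent on the cycle only her minimum preferred set, dumping the bottom half-bundles into the pool --- this is precisely what happens in Case 2 of the paper's proof of Theorem~\ref{thm: n-2}, where $B_{1}$ becomes unallocated. Consequently, maximizing $\phi$ within the class of EFX allocations with at most $k$ unallocated items fails as stated: the lex-larger witness may leave the class, so its existence does not contradict in-class maximality. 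Drop route (b) and state the lemma's hypothesis in the ``at least $k$'' form; with that, your proof is complete and coincides with the cited one.
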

By Lemma~\ref{lem: progress}, in order to prove Theorem~\ref{thm: n-2}, it suffices to show that for every partial EFX allocation $X$ with at least $n-1$ unallocated items, there exists an EFX allocation $Y$ such that $Y \succ_{\rm lex} X$.
We first prove the following lemma, which is used in the proof of Theorem~\ref{thm: n-2}.
\begin{lemma}\label{lem: one edge}

Let $X$ be an EFX allocation with at least $n-1$ unallocated items.
Then, there exists an EFX allocation $Y$ Pareto dominating $X$ in the following two cases.
\begin{enumerate}
\item[$(1)$] the number of unallocated items is at least $n$.
\item[$(2)$] there exists at least one envy edge $j \rightarrow i$ in $M_X$.
\end{enumerate}
Moreover, in case $(2)$, some agent $l \in N\setminus i$ is strictly better off than in $X$, i.e., $Y_l >_l X_l$.
\end{lemma}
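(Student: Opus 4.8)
The plan is to reduce both cases to the presence of either a self $g$-champion or a Pareto improvable cycle in $M_X$, after which Corollary~\ref{cor: PI} and Lemma~\ref{lem: PI} immediately yield the desired Pareto-dominating EFX allocation $Y$. The unifying device is a \emph{functional graph} on the agents built from champion edges. Since every agent has a $g$-champion for each unallocated item $g$, assigning to an agent $a$ a private unallocated item $g_a$ gives one incoming champion edge $c_a \xrightarrow{g_a} a$, where $c_a$ denotes a $g_a$-champion of $a$. Iterating the map $a \mapsto c_a$ on a finite agent set must close into a cycle, and because each agent on that cycle contributes its own distinct item, the corresponding champion edges are pairwise item-disjoint, which is exactly the Pareto improvable condition.

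For case $(1)$, with at least $n$ unallocated items I can hand every agent a distinct private item, so the map $\pi(a)=c_a$ is defined on all of $N$. Any cycle of $\pi$ of length one is a self $g$-champion, handled by Corollary~\ref{cor: PI}; any longer cycle is a PI cycle whose champion edges use pairwise distinct items, handled by Lemma~\ref{lem: PI}. Either way I obtain $Y$ Pareto dominating $X$, which is all case $(1)$ asks for.

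Case $(2)$ is the interesting one, and the main obstacle is the item budget: with only $n-1$ unallocated items I cannot give all $n$ agents a private item. The idea is to let the envy edge $j\rightarrow i$ cover the one agent left without an item. Concretely, I would assign distinct private items $g_a$ to the $n-1$ agents $a\in N\setminus\{i\}$ and set $\pi(a)=c_a$ for these, while setting $\pi(i)=j$ (recall $j\neq i$, since no agent envies itself). A cycle of $\pi$ now consists of champion edges with pairwise distinct items together with \emph{at most one} envy edge at $i$; because an envy edge carries the label $\emptyset$, which is disjoint from every item, the cycle is still Pareto improvable. A length-one cycle can only occur at some $a\neq i$, because $\pi(i)=j\neq i$, and it is a self $g_a$-champion handled again by Corollary~\ref{cor: PI}.

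Finally, I must verify the ``moreover'' claim that some $l\in N\setminus\{i\}$ is strictly better off. Lemma~\ref{lem: PI} guarantees that every agent on the PI cycle strictly improves, and the self-champion case improves the self-champion $a\neq i$; so it suffices to exhibit an agent other than $i$ on the cycle. If $i$ is not on the cycle this is immediate, and if $i$ is on the cycle then its predecessor $\pi(i)=j\neq i$ lies on the cycle as well. The only delicate points to get right are that a single missing item is harmless precisely because the envy edge substitutes a label-$\emptyset$ edge for agent $i$, and that self-champions must be routed through Corollary~\ref{cor: PI} rather than Lemma~\ref{lem: PI}, since a one-edge champion cycle violates the disjointness condition $H_1\cap H_1=\emptyset$.
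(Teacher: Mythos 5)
Your proof is correct and is essentially the paper's own argument in different packaging: the paper grows a path by repeatedly taking a champion of the current endpoint with a fresh unallocated item (anchored at the envy edge $j \rightarrow i$ in case $(2)$) and closes it into a PI cycle at the first repeated agent, which is exactly a cycle of your map $\pi$ with per-agent private items, with both arguments then invoking Lemma~\ref{lem: PI}. Your routing of one-edge cycles (self $g$-champions) through Corollary~\ref{cor: PI}, and your observation that any $\pi$-cycle through $i$ must also contain $j = \pi(i) \neq i$, correspond precisely to how the paper secures the ``moreover'' claim that the PI cycle contains some agent in $N \setminus i$.
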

\begin{proof}
Let $\{g_1,\dots, g_k\}$ denote the set of unallocated items.
We first prove the case of $(1)$.
It suffices to prove that there exists a PI cycle in $M_X$ by Lemma~\ref{lem: PI}.
Let $a_1$ be an arbitrary agent.
Then, some agent $a_2$ $g_1$-champions $a_1$.
If $a_2=a_1$, then we have a PI cycle, and we are done.
Assume that $a_2\neq a_1$.
Then, some agent $a_3$ $g_2$-champions $a_2$.
If $a_3=a_1~{\rm or }~a_2$, then we have a PI cycle.
Indeed, in the first case we have a cycle $a_1  \xrightarrow{g_2} a_2  \xrightarrow{g_1}  a_1 $, and in the second case we have a self $g_2$-champion.
We can continue this way to conclude that w.l.o.g.~we have a directed path $a_n  \xrightarrow{g_{n-1}} a_{n-1}  \xrightarrow{g_{n-2}} \cdots  \xrightarrow{g_1} a_1$ in $M_X$, where $a_1,\dots,a_n$ are different agents.
Now, some agent $g_n$-champions $a_n$ in $M_X$.
No matter who it is, there exists a PI cycle, and we are done.

We prove the case of $(2)$ in a similar way.
Assume w.l.o.g that some agent $a_2$ envies $a_1$.
By a similar argument as above, we can conclude that w.l.o.g. we have a directed path
$a_n  \xrightarrow{g_{n-1}} a_{n-1}  \xrightarrow{g_{n-2}} \cdots  \xrightarrow{g_3}a_3 \xrightarrow{g_2} a_2 \rightarrow a_1$ in $M_X$, where $a_1,\dots,a_n$ are different agents.
Now, some agent $g_1$-champions $a_n$.
No matter who it is, there exists a PI cycle, and we are done.
Moreover, in any cases, we have a PI cycle containing some agent in $N\setminus a_1$.
Hence, the last statement of lemma holds by Lemma~\ref{lem: PI}.
\end{proof}
We are now ready to prove Theorem~\ref{thm: n-2}.
We fix an arbitrary ordering of the agents.
\begin{proof}[Proof of Theorem \ref{thm: n-2}]
Let $X$ be an EFX allocation with $k \ge n-1$ unallocated items, and let $\{g_1,\dots, g_k\}$ denote the set of unallocated items.
By Lemma~\ref{lem: progress}, it suffices to prove that there exists an EFX allocation $Y$ such that $Y \succ_{\rm lex} X$.
By Lemma~\ref{lem: one edge}, 
when $k\ge n$, or $k=n-1$ and there exists at least one envy edge in $M_X$, we are done.
Assume that $k=n-1$ and there exists no envy edge in $M_X$.
Let $a_1$ be the last agent in the fixed ordering, i.e., $a_1$ is the least important agent in the lexicographic potential function.
By a similar argument in Lemma~\ref{lem: one edge}, we conclude that w.l.o.g. we have a directed path
$a_n  \xrightarrow{g_{n-1}} a_{n-1}  \xrightarrow{g_{n-2}} \cdots  \xrightarrow{g_1} a_1$ in $M_X$, where $a_1,\dots,a_n$ are different agents.
Furthermore, we may assume that there are no self $g_i$-champions for $1\le i \le n-1$ since otherwise we have a PI-cycle.
Since there are no self-champions, and there exists no envy edge in $M_X$, $a_{i+1}$ $g_i$-decomposes $a_i$ for $1\le i \le n-1$ by Lemma~\ref{lem: decompose}.
Let $T_i$ and $B_i$ are the top and bottom half-bundles of $X_{a_i}$ decomposed by $a_{i+1}$ for $1\le i \le n-1$, respectively.
Consider $Z=\max_{a_1}\{T_1\cup g_1, T_2\cup g_2,\dots,T_{n-1}\cup g_{n-1}, X_{a_2}, \dots, X_{a_n}\}$\footnote{$\max_{a_1}\{T_1\cup g_1, T_2\cup g_2,\dots,T_{n-1}\cup g_{n-1}, X_{a_2}, \dots, X_{a_n}\}$ is $a_1$'s most favorite bundle out of $T_1\cup g_1, T_2\cup g_2,\dots,T_{n-1}\cup g_{n-1}, X_{a_2}, \dots, X_{a_n}$.}.
We discuss in two cases.
\begin{description}
\item[Case 1:]  $Z=T_i \cup g_i$ or $X_{a_i}$ for $2\le i \le n$

We define a new allocation $X'$ as follows:
\begin{align*}
X'_{a_1} &= Z,&\\
X'_{a_j} &= T_{j-1}\cup g_{j-1} & {\rm for}\ 2 \le j \le i, \\
X'_{a_j} &= X_{a_j} & {\rm for}\ i < j \le n.
\end{align*}
We show that $X'$ is EFX and $X' \succ_{\rm lex} X$.
For $1\le t \le i-1$, since $T_t\cup g_t$ is a minimum preferred set of $a_{t+1}$ and $a_{t+1}$ is a most envious agent for $T_t\cup g_t$, no agent EFX envies $T_t\cup g_t$ in $X$.
Thus, for $2\le s \le n$, since $X'_{a_s} \ge_{a_s} X_{a_s}$, agent $a_s$ does not EFX envy $T_t\cup g_t$ in $X'$.
For $2\le s \le n$, since $X$ is envy-free and the fact that $X'_{a_s} \ge_{a_s} X_{a_s}$, agent $a_s$ does not envy $X_{a_u}$ for $1\le u \le n$ in $X'$.
By the definition of $Z$, $a_1$ does not envy any agents in $X'$.
Therefore, $X'$ is EFX.
Furthermore, for $2\le j \le i$, each agent $a_j$ is strictly better off than in $X$, and each agent $a_j$ does not change her bundle for $i< j \le n$.
Thus, we have $X' \succ_{\rm lex} X$, and we are done.

\item[Case 2:]  $Z=T_1\cup g_1$

We define a new allocation $X'$ as follows:
\begin{align*}
X'_{a_1} &= Z,&\\
X'_{a_i} &= X_{a_i} & {\rm for}\ 2 \le i \le n. 
\end{align*}
We show that $X'$ is EFX.
Since we change only $a_1$'s bundle from $X$, it is enough to check that there is no EFX envy from or to $a_1$.
By the definition of $Z$, $a_1$ does not envy any agent in $X'$.
Since $Z = T_1\cup g_1$ is a minimum preferred set of $a_2$ for $X_{a_1}\cup g_1$, and $a_2$ is a most envious agent for $X_{a_1}\cup g_1$, no agent EFX envies $T_1\cup g_1$ in $X'$.
Thus $X'$ is EFX.
In addition, since $Z=T_1\cup g_1$ is a minimum preferred set of $a_2$, $a_2$ envies $a_1$ in $X'$.
By the fact that $B_1\neq \emptyset$, we now have at least $n-1$ items in $\{g_2,\dots, g_{n-1}\}\cup B_1$ that are unallocated.
Thus by the case of (2) in Lemma~\ref{lem: one edge}, there exists an EFX allocation $X''$ that Pareto dominates $X'$.
Furthermore, there exists some agent $a_i\ (2\le i \le n)$ such that $X''_{a_i} >_{a_i} X'_{a_i} = X_{a_i}$. Since $X''_{a_j} \ge_{a_j} X'_{a_j}=X_{a_j}$ for $2\le j \le n$, we have $X'' \succ_{\rm lex} X$, and we are done.

\end{description}
\end{proof}

\section{Existence of EFX with One of Two General Valuations}
\label{sec: two}

In this section, we prove Theorem~\ref{thm: two}.
For two general valuation functions $v_a$ and $v_b$, 
let $N_a$ (resp.~$N_b$) be the set of agents whose valuation is $v_a$ (resp.~$v_b$).
To prove Theorem~\ref{thm: two}, we introduce a new potential function.
For an allocation $X$, 
we write $N_\a = \{\a_0, \a_1, \ldots , \a_s\}$ and $N_\b = \{\b_0, \b_1, \ldots , \b_t\}$, where $X_{\a_0} \la X_{\a_1} \la \cdots \la X_{\a_s}$
and $X_{\b_0} \lb X_{\b_1} \lb \cdots \lb X_{\b_t}$.
Define the {\it partition leximin potential function} $\psi(X)$ as the vector $(v_{a}(X_{a_0}),\dots,v_{a}(X_{a_s}), v_{b}(X_{b_0}),\dots,v_{b}(X_{b_t}))$.
\begin{definition}
For two allocations $X,Y$, 
we denote $Y \succ_{\rm p.lexmin} X$ if $\psi(Y)$ is lexicographically larger than $\psi(X)$.
\end{definition}
That is, we prioritize $N_a$ over $N_b$, compare agents in $N_a$ by the leximin ordering, and second compare agents in $N_b$ by the leximin ordering.
Note that if $Y$ Pareto dominates $X$, then $Y \succ_{\rm p.lexmin} X$ but not vice versa.
Our goal is to show the following theorem.

\begin{theorem}\label{thm: psi}
Let $X$ be a partial EFX allocation.
Then, there exists an EFX allocation $Y$ such that $Y \succ_{\rm p.lexmin} X$.\end{theorem}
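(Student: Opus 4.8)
The plan is to reduce, via the champion graph $M_X$ taken with respect to a fixed unallocated item $g$, to the case where no Pareto improvement is possible, and then to exploit the two-type structure together with the leximin flavour of $\psi$. First I would invoke Lemma~\ref{lem: PI} and Corollary~\ref{cor: PI}: if $M_X$ contains a Pareto improvable cycle --- in particular an envy-cycle, a self $g$-champion, or a cycle of envy edges with at most one champion edge --- then there is an EFX allocation $Y$ Pareto dominating $X$, and since Pareto domination weakly increases every coordinate and strictly increases at least one, $\psi(Y)$ is lexicographically larger than $\psi(X)$, so $Y \succ_{\rm p.lexmin} X$. Hence I may assume $M_X$ has none of these, and in particular no self $g$-champion and no envy-cycle.

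Next I record the right reading of the goal: because the $N_a$-block precedes the $N_b$-block in $\psi$ and each block is sorted increasingly, $\psi$ strictly increases as soon as either (i) the sorted vector of $a$-values increases lexicographically while no $a$-agent's value decreases, or (ii) the multiset of $a$-values is unchanged and the sorted vector of $b$-values increases lexicographically. This dictates the choice of the agent to be made worse off in the technique of Section~\ref{sec: tech}: it must be a $b$-agent, and the natural candidate is $b_t$, the $b$-agent holding the most valued bundle. If $N_b=\emptyset$ the instance has identical valuations and the claim follows from~\cite{plaut2020almost}, so I assume both types are nonempty.

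Then I would mimic the case analysis in the proof of Theorem~\ref{thm: n-2}. Following champion edges (all labelled $g$) backward from $b_t$, and applying Lemma~\ref{lem: decompose} at each step where the agent neither envies nor self-champions, I obtain a champion path or cycle together with top and bottom half-bundles $T_i,B_i$; each minimum preferred set $T_i\cup g$ is \emph{safe}, meaning no agent EFX envies it. I set $Z$ to be $b_t$'s favourite bundle among the list consisting of these safe sets and the current bundles of the other agents on the path, and reallocate so that $b_t$ receives $Z$ while every other listed agent receives a bundle it weakly prefers to its current one. Handing $b_t$ its favourite from the list guarantees $b_t$ envies nobody, and the safeness of the listed sets together with the fact that everyone else is weakly better off guarantees $Y$ is EFX. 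Splitting on whether $Z$ is a current bundle $X_{a_i}$ or a set $T_i\cup g$ with $i\ge 2$ (the analogue of Case~1, where an agent on the path --- chosen of type $a$ whenever possible --- strictly improves, yielding (i)) versus $Z=T_1\cup g$ at the start of the path (the analogue of Case~2, where handing out a minimum preferred set manufactures a fresh envy edge) drives the argument.

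The hard part will be securing strict $\psi$-progress exactly in the regime where no $a$-agent can be made to strictly improve, so that (i) fails and I must instead achieve (ii): keep the multiset of $a$-values fixed and obtain a genuine leximin improvement inside $N_b$ while preserving EFX and sacrificing only $b_t$. This is where the identical-within-type assumption is essential --- it lets $b$-agents be permuted and their bundles recombined without changing which sets they EFX envy relative to one another --- and it is precisely the step with no counterpart in the Pareto-domination arguments of~\cite{mahara2020existence, berger2021almost}, explaining why those approaches break for general valuations. I expect to resolve it by the Case~2 bootstrap: after placing a safe minimum preferred set I will have created an envy edge in the intermediate configuration, and re-examining its champion graph I will extract either a Pareto improvable cycle (giving $\psi$-progress directly) or an exchange among the $b$-agents that lowers $b_t$ and raises a smaller $b$-bundle, yielding condition (ii).
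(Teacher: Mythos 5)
There is a genuine gap, and it sits exactly where your core mechanism lives. You propose to transplant the champion-path argument from Theorem~\ref{thm: n-2}: follow $g$-champion edges backward from the sacrificed agent, decompose at each step via Lemma~\ref{lem: decompose}, then shift bundles along the path. But that argument fundamentally requires at least $n-1$ \emph{distinct} unallocated items: each champion edge on the path carries its own label $g_i$, which is what lets the reallocation $X'_{a_j}=T_{j-1}\cup g_{j-1}$ hand every agent on the path a weakly preferred bundle, and what lets any chord close a Pareto improvable cycle (Definition~\ref{def: PI} demands pairwise disjoint labels). In the setting of Theorem~\ref{thm: psi} there may be a \emph{single} unallocated item $g$; all edges of your path are labeled $g$, so you can give $T_i\cup g$ to at most one agent, two $g$-labeled edges never form a PI cycle, and you cannot instead shift whole bundles along champion edges, since $i$ $g$-championing $j$ only means $i$ envies $T_j\cup g$, not $X_j$. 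Relatedly, your choice of $b_t$ (the maximum $b$-agent) as the sacrificed agent breaks the $\kappa$-minimality transfer that drives the paper's proof: from $X_{b_0}\le_b X_i$ one gets $\kappa_X(b_0,S)\le\kappa_X(i,S)$, which is what forces ``$a_0$ $g$-decomposes $b_0$'' (any $N_b$ champion of $X_{b_0}\cup g$ would make $b_0$ a self $g$-champion) and, later, lets an arbitrary $b$-champion be replaced by $b_0$. This works only for the \emph{minimum} agent of each type; the paper sacrifices $b_0$, not $b_t$, for this structural reason (note that in the sorted vector $v_b(X_{b_0})$ is actually the most significant $b$-coordinate, so ``least important in $\psi$'' is not the operative criterion).

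The ``hard part'' you defer is precisely where the paper's new ideas live, and your sketch of it points in the wrong direction: the paper never achieves your route (ii) of leximin progress inside the $N_b$-block with the $a$-multiset fixed — in every case it secures strict progress in the $N_a$-block (either $a_0$ strictly improves, or after the swap every $a$-agent sits strictly above $v_a(X_{a_0})$, raising the first $a$-coordinate). The missing ingredients are: (1) the truncated bundles $U_{a_i}$, where $\hat{X}_{a_i}$ is a maximum-cardinality proper subset of $X_{a_i}$ maximizing $v_b$, included in the candidate list for \emph{every} $a$-agent — this is what prevents the sacrificed agent from EFX envying any agent of $N_a$ after her value drops, whereas your list (safe sets plus bundles of agents ``on the path'') leaves her free to EFX envy off-path agents; (2) the semi-EFX notion together with Lemma~\ref{lem: semi}, which repairs EFX envy arising inside $N_b$ by shrinking envied $b$-bundles without touching any $a$-bundle or $X'_{b_0}$, replacing your unsubstantiated appeal to ``permuting and recombining'' $b$-bundles; and (3) in the case $Z=\hat{X}_{a_r}$, the item $g''\in X_{a_r}\setminus\hat{X}_{a_r}$ freed by the truncation serves as a second unallocated item, and a full-cardinality argument ($\kappa_{X'}(a_0,X'_{b_0}\cup g'')=|X'_{b_0}\cup g''|$) makes $a_0$ a $g''$-champion of $b_0$, yielding the PI cycle $a_0\xrightarrow{g''} b_0 \xrightarrow{g'} a_0$ with $g'\in B_{b_0}$. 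Your correct pieces — the PI-cycle reduction, sacrificing a $b$-agent who receives her favorite bundle from a safe list, and the ``bootstrap'' of manufacturing an envy edge by placing a minimum preferred set and re-championing — do all appear in the paper, but as written your proposal has no working progress mechanism in the one-unallocated-item regime, which is the only nontrivial regime for this theorem.
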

If Theorem~\ref{thm: psi} holds, then since there are finitely many allocations, there must exist a complete EFX allocation, and thus Theorem~\ref{thm: two} holds.

We say that an allocation $X$ is {\it semi-EFX} if there can be EFX envy only among agents belonging to $N_b$ in $X$, i.e., no agent belonging to $N_a$ EFX envies any agents, and no agent belonging to $N_b$ EFX envies any agent belonging to $N_a$ in $X$.
The following lemma shows that if we have a semi-EFX allocation, then we can obtain an EFX allocation such that all the agents in $N_a$ and $b_0\in N_b$ do not change their bundles.
\begin{lemma}\label{lem: semi}
Let $X$ be a semi-EFX allocation such that $X_{\b_0} \lb X_{\b_1} \lb \cdots \lb X_{\b_t}$.
Then, there exists an EFX allocation $Y$ such that $Y_{a_i}=X_{a_i}$ for any agent $a_i\in N_a$, and $Y_{b_0}=X_{b_0}$.
\end{lemma}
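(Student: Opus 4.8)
The plan is to leave the bundles of every $N_a$-agent and of $b_0$ untouched, and to shrink each remaining $N_b$-bundle just enough to kill all EFX-envy while keeping it above $b_0$'s value. Concretely, I would set $Y_{a_i}=X_{a_i}$ for all $a_i\in N_a$ and $Y_{b_0}=X_{b_0}$, and for each $i\ge 1$ let $Y_{b_i}$ be a \emph{minimal} subset of $X_{b_i}$ with $v_b(Y_{b_i})>v_b(X_{b_0})$, discarding (leaving unallocated) the items of $X_{b_i}\setminus Y_{b_i}$. Such a minimal subset exists because $X_{b_0}\lb X_{b_i}$: starting from $X_{b_i}$ and repeatedly deleting an item while the value stays above $v_b(X_{b_0})$ terminates, by monotonicity, at a set $Y_{b_i}$ with $v_b(Y_{b_i})>v_b(X_{b_0})$ and $v_b(Y_{b_i}\setminus h)\le v_b(X_{b_0})$ for every $h\in Y_{b_i}$. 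Note that the lemma only asks for an EFX allocation, so allowing $Y$ to be partial and discarding the trimmed items is legitimate; this is what frees us from having to reposition the leftover items without reintroducing envy.

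The heart of the argument is then to verify that the single threshold $v_b(X_{b_0})$ simultaneously rules out every kind of EFX-envy. For envy among $N_b$-agents: each $i\ge1$ has $v_b(Y_{b_i})>v_b(X_{b_0})\ge v_b(Y_{b_j}\setminus h)$ (using minimality of $Y_{b_j}$ for $j\ge1$, and monotonicity for $j=0$), so no $N_b$-agent, including $b_0$, EFX-envies another $N_b$-agent. For an $N_b$-agent envying an $N_a$-agent: since $X$ is semi-EFX, $b_0$ does not EFX-envy $a_j$, hence $v_b(X_{a_j}\setminus h)\le v_b(X_{b_0})<v_b(Y_{b_i})$, so no $N_b$-agent EFX-envies any $a_j$ (and $b_0$'s unchanged bundle is handled directly by semi-EFX). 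The $N_a$-bundles are unchanged, so by semi-EFX the $N_a$-agents still do not EFX-envy each other or $b_0$.

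The step I expect to be the main obstacle — and the only one that uses that we trimmed \emph{within} each original bundle rather than reshuffling items across bundles — is showing that no $a_j$ EFX-envies a new bundle $Y_{b_i}$ ($i\ge1$); a fresh cross-bundle partition could easily produce a set that some $a_j$ values highly. Here I would use $Y_{b_i}\subseteq X_{b_i}$ together with semi-EFX, which gives $v_a(X_{b_i}\setminus h')\le v_a(X_{a_j})$ for every $h'\in X_{b_i}$. If $Y_{b_i}=X_{b_i}$ this is immediate; if $Y_{b_i}\subsetneq X_{b_i}$, then for any $h\in Y_{b_i}$ I pick some $h'\in X_{b_i}\setminus Y_{b_i}$ and observe $Y_{b_i}\setminus h\subseteq X_{b_i}\setminus h'$, so monotonicity yields $v_a(Y_{b_i}\setminus h)\le v_a(X_{b_i}\setminus h')\le v_a(X_{a_j})$. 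Hence no $a_j$ EFX-envies $Y_{b_i}$.

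The design choice that makes everything line up is to use $b_0$'s value as the common threshold: it is small enough that trimming each $X_{b_i}$ to barely exceed it makes the new bundles unenviable from below by $b_0$ (minimality), yet large enough — precisely because $b_0$ does not EFX-envy the $N_a$-agents in the semi-EFX allocation — that the $N_b$-agents still do not EFX-envy the $N_a$-agents after their values drop. Combining the four checks, $Y$ is an EFX allocation with $Y_{a_i}=X_{a_i}$ for all $a_i\in N_a$ and $Y_{b_0}=X_{b_0}$, as required.
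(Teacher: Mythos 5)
Your proof is correct, but it proceeds differently from the paper's. The paper's proof is an iterative repair: as long as some $b_i$ EFX envies some $b_j$ (both in $N_b$), it removes the single witness item $h$ from $X_{b_j}$, checks that the resulting allocation is still semi-EFX (using monotonicity, and the observation that $X'_{b_j} \gb X_{b_i} \geb X_{b_0}$ forces $b_j \neq b_0$ and lets $b_j$ inherit $b_i$'s non-envy toward $N_a$), and terminates because the number of allocated items strictly decreases. You instead give a one-shot construction: trim each $X_{b_i}$ ($i\ge 1$) to a subset $Y_{b_i}$ that is minimal under single-item deletion subject to $Y_{b_i} \gb X_{b_0}$, and verify EFX directly against the uniform threshold $v_b(X_{b_0})$. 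Your four-case verification is sound; in particular the crucial step --- that no $a_j$ EFX envies a trimmed bundle --- follows from $Y_{b_i}\setminus h \subseteq X_{b_i}\setminus h$ and semi-EFX in $X$ (your detour through an auxiliary item $h'\in X_{b_i}\setminus Y_{b_i}$ works but is unnecessary; removing the same $h$ already gives containment). Both proofs exploit the same two facts: discarded items are legitimate since $Y$ may be partial, and EFX envy toward a bundle cannot increase when that bundle shrinks. What the paper's route buys is brevity and parsimony --- it only deletes items when an actual EFX-envy is witnessed, so it typically discards less; what your route buys is an explicit closed-form description of $Y$ with the stronger invariant $Y_{b_i} \gb X_{b_0}$ for all $i\ge 1$, avoiding the need to re-establish the semi-EFX invariant after every step and giving a termination-free argument (apart from the trivial greedy subroutine producing each minimal $Y_{b_i}$).
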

\begin{proof}
If $X$ is EFX, then the lemma obviously holds.
Assume that $X$ is not EFX.
Then, for some two agents $b_i, b_j \in N_b$, $b_i$ EFX envies $b_j$ in $X$.
Thus, there exists an item $h \in X_{b_j}$ such that $X_{b_i} <_b X_{b_j} \setminus h$.
We define a new allocation $X'$ as $X'_{b_j}=X_{b_j} \setminus h$, and $X'_{k}=X_k$ for any $k\in N\setminus {b_j}$.
Then, $X'$ is also semi-EFX.
Indeed, since we only change $b_j$'s bundle, it suffice to consider EFX envy from or to $b_j$.
Since $b_j$'s bundle is a subset of $X_{b_j}$, and valuations are monotone, agents who do not EFX envy $b_j$ in $X$ do not EFX envy $b_j$ either in $X'$.
In addition, since $X_{b_i}=X'_{b_i} <_b X'_{b_j}$, and $b_i$ does not EFX envy any agent belonging to $N_a$ in $X$, $b_j$ does not EFX envy any agent in $N_a$ either in $X'$.
Therefore $X'$ is also semi-EFX.
Furthermore since $b_j$ is not $b_0$, we have $X'_{b_0}=X_{b_0}$.
If $X'$ is EFX, then we are done.
Otherwise, since the number of all items allocated is decreasing, we can continue this way to obtain an EFX allocation $Y$ such that $Y_{a_i}=X_{a_i}$ for any agent $a_i\in N_a$, and $Y_{b_0}=X_{b_0}$.
\end{proof}
We are now ready to prove Theorem~\ref{thm: psi}.
\begin{proof}[Proof of Theorem~\ref{thm: psi}]
Let $X$ be a partial EFX allocation and let $g$ be an unallocated item.
Define $N_\a = \{\a_0, \a_1, \ldots , \a_s\}$ and $N_\b = \{\b_0, \b_1, \ldots , \b_t\}$, where $X_{\a_0} \la X_{\a_1} \la \cdots \la X_{\a_s}$
and $X_{\b_0} \lb X_{\b_1} \lb \cdots \lb X_{\b_t}$.
If there exists a PI cycle in $M_X$, then we are done by Lemma~\ref{lem: PI}.
Assume that there is no PI cycle in $M_X$.
We first show that $a_0$ $g$-decomposes $b_0$.
By the assumption, neither $a_0$ nor $b_0$ is a self $g$-champion.
If $a_0$ envies $b_0$, then every agent other than $a_0$ is envied by some other agents.
Some agent $g$-champions $a_0$.
No matter who it is, there exists a PI cycle in $M_X$, and this is a contradiction.
Thus, $a_0$ does not envy $b_0$.
Now, some agent $i$ $g$-champions $b_0$.
If $i \in N_b$, then since $X_{b_0} \le_b X_{i}$, we have $\kappa_{X}(b_0, X_{b_0}\cup g)\le \kappa_{X}(i, X_{b_0}\cup g)=\kappa_{X}(X_{\b_0}\cup g)$.
This implies that $\b_0$ is a self $g$-champion, and this is a contradiction.
Hence, we have $i\in N_a$.
Then, since $X_{a_0} \le_a X_i$, we have $\kappa_{X}(a_0, X_{b_0}\cup g)\le \kappa_{X}(i, X_{b_0}\cup g)=\kappa_{X}(X_{\b_0}\cup g)$.
Hence, $a_0$ $g$-champions $b_0$.
As a result, $a_0$ $g$-decomposes $b_0$ by Lemma~\ref{lem: decompose}.
Therefore, $X_{b_0}$ is decomposed into top and bottom half-bundles.
Let $T_{\b_0}$ and $B_{\b_0}$ be the top and bottom half-bundles of $X_{b_0}$. 
Figure~\ref{fig: 1} partially illustrates $M_X$.
For $0\le i \le s$, we define $U_{a_i}\subseteq X_{a_i}$ as follows:
\begin{align*}
 U_{a_i} &= \left\{
    \begin{array}{ll}
      X_{a_i} & {\rm if}~ b_0 ~{\rm does~not~envy}~ a_i~{\rm in}~X,\\
     \hat{X}_{a_i} & {\rm otherwise},
    \end{array}
  \right. &\\
\end{align*}
where, $ \hat{X}_{a_i}$ is a maximum cardinality proper subset of $X_{a_i}$ maximizing $v_b( \hat{X}_{a_i})$.
Note that we have $|\hat{X}_{a_i}|=|X_{a_i}|-1$.
Consider $Z =\max_{\b}\{ T_{\b_0}\cup g, U_{\a_0},\dots,U_{\a_s}\}$.
We define a new allocation $X'$ as follows: 

\begin{figure}[tbp]
   \includegraphics[width=70mm]{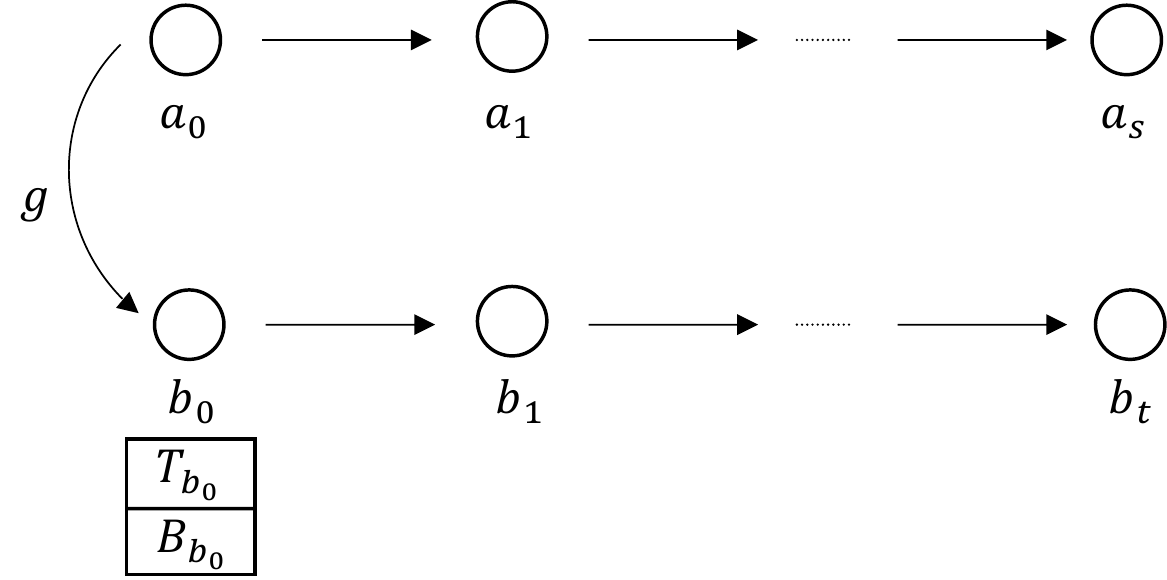}
   \caption{The champion graph $M_X$ (the edges are only partially drawn) in the proof of Theorem~\ref{thm: psi}. Every agent other than $a_0$ and $b_0$ is envied by some other agents, and $a_0$ $g$-decomposes $b_0$ in $M_X$.}

  \label{fig: 1}

\end{figure}

\begin{align*}
X'_{\a_i} &= \left\{
    \begin{array}{ll}
      T_{\b_0}\cup g & {\rm if}~Z=U_{\a_i}\\
      X_{\a_i} & {\rm otherwise}
    \end{array}
  \right. & ~{\rm for}~ 0\le i \le s ,\\
X'_{\b_0} &= Z, &\\
X'_{\b_j} &= X_{\b_j} & ~{\rm for}~ 1\le j \le t.&
\end{align*}
We can easily check that $X'$ is a legal allocation.
That is, $X'_{a_0},\dots,X'_{a_s},X'_{b_0},\dots,X'_{b_t}$ is a partition of a subset of $M$.
We show that $X'$ is semi-EFX.
\begin{itemize}
\item{\it Any two agents in $N_a$ do not EFX envy each other}:
Note that since $X_{a_0} <_a T_{b_0}\cup g$ and $X_{a_0} <_a X_{a_k}$ for $1\le k \le s$, we have $X_{a_0} \le_a X'_{a_i}$ for $0\le i \le s$.
Let $a_i$ and $a_{i'}$ be two agents in $N_a$.
If $X'_{a_{i'}}=X_{a_{i'}}$, then since $X_{a_0} \le_a X'_{a_i}$ and by the fact that $X$ is EFX, $a_i$ does not EFX envy $a_{i'}$ in $X'$. 
If $X'_{a_{i'}}=T_{b_0}\cup g$, then since $X_{a_0} \le_a X'_{a_i}$ and $a_0$ does not EFX envy $T_{b_0}\cup g$ in $X$, $a_i$ does not EFX envy $a_{i'}$ in $X'$. 
\item{\it Any agent in $N_a$ does not EFX envy any agent in $N_b\setminus b_0$}:
Since $X_{a_0} \le_a X'_{a_i}$ for $0\le i \le s$, $X$ is EFX, and any agent in $N_b\setminus b_0$ does not change her bundle, any agent in $N_a$ does not EFX envy any agent in $N_b\setminus b_0$.
\item{\it Any agent in $N_a$ does not EFX envy $b_0$}:
If $Z=T_{b_0}\cup g$, then since $X_{a_0} \le_a X'_{a_i}$ for $0\le i \le s$ and $a_0$ does not EFX envy $T_{b_0}\cup g$ in $X$, any agent in $N_a$ does not EFX envy $b_0$ in $X'$.
If $Z=U_{a_k}$ for some $0\le k \le s$, then since $X_{a_0} \le_a X'_{a_i}$ for $0\le i \le s$, and $a_0$ does not EFX envy $U_{a_k} \subseteq X_{a_k}$, 
any agent in $N_a$ does not EFX envy $b_0$ in $X'$.
\item{\it Any agent in $N_b\setminus b_0$ does not EFX envy any agent in $N_a$}:
Let $a_i$ be any agent in $N_a$ and let $b_j$ be any agent in $N_b\setminus b_0$. 
If $X'_{\a_i}=X_{\a_i}$, then since $X'_{b_j}=X_{b_j}$ and $X$ is EFX, $b_j$ does not EFX envy $a_i$.
If $X'_{\a_i}=T_{b_0}\cup g$, then since $b_0$ is not a self $g$-champion in $X$, we have $T_{\b_0}\cup g <_b X_{b_j}$.
Thus $b_j$ does not envy $a_i$ in $X'$.
\item{\it $b_0$ does not EFX envy any agent in $N_a$}:
Let $a_i$ be any agent in $N_a$.
If $X'_{a_i}= T_{b_0}\cup g$, then since $Z=\max_{\b}\{ T_{\b_0}\cup g, U_{\a_0},\dots,U_{\a_s}\} \ge_b T_{b_0}\cup g$, $b_0$ does not envy $a_i$ in $X'$.
If $X'_{a_i}=X_{a_i}$, then since $Z=\max_{\b}\{ T_{\b_0}\cup g, U_{\a_0},\dots,U_{\a_s}\} \ge_b U_{a_i}$, for any proper subset $S$ of $X_{a_i}$, we have $Z \ge_b U_{\a_i} \ge_b S$ by the definition of $U_{\a_i}$ and $\hat{X}_{a_i}$.
Thus $b_0$ does not EFX envy $a_i$ in $X'$.
\end{itemize}
Therefore $X'$ is semi-EFX.
By Lemma~\ref{lem: semi}, there exists an EFX allocation $X''$ such that $X''_{a_i}=X'_{a_i}$ for $0\le i \le s$, and $X''_{b_0}=X'_{b_0}$.
We discuss in the following three cases.
\begin{description}
\item[Case 1:]  $Z=U_{\a_0}$

In this case, we have $X''_{\a_0}=X'_{\a_0}=T_{\b_0}\cup g >_{\a} X_{\a_0}$ and $X''_{\a_k}=X'_{\a_k}=X_{\a_k}$ for $1\le k \le s$.
Thus, we have $X'' \succ_{\rm p.lexmin} X$, and we are done.
\item[Case 2:]  $Z=T_{\b_0}\cup g$

In this case, since we have $X''_{\a_0}=X'_{\a_0}=X_{\a_0} <_{\a} T_{\b_0}\cup g=X'_{\b_0}=X''_{b_0}$, 
$\a_0$ envies $\b_0$ in $X''$.
Thus, every agent other than $a_0$ is envied by some other agents in $X''$.
By the fact that $B_{b_0}\neq \emptyset$, there is an unallocated item $g'\in B_{b_0}$.
Then, some agent $l$ $g'$-champions $a_0$ (see Figure~\ref{fig: 2}).
If $l=a_i \in N_a$, then by following agents in $N_a$ backwards we obtain a PI cycle $a_0 \rightarrow\cdots \rightarrow a_{i-1}\rightarrow a_i\xrightarrow{g'} a_0$ in $M_{X''}$.
If $l\in N_b$, then by following agents in $N_b$ backwards we also obatain a PI cycle $a_0 \rightarrow b_0 \rightarrow \cdots \rightarrow  l\xrightarrow{g'} a_0$ in $M_{X''}$.
Therefore, in either case, 
there exists a PI cycle containing $a_0$ in $M_{X''}$.
By Lemma~\ref{lem: PI}, there exists an EFX allocation $X'''$ such that $X'''_{\a_0}>_a X''_{\a_0}=X'_{\a_0}=X_{\a_0}$ and $X'''_{\a_k}\ge_a X''_{\a_k}=X'_{\a_k}=X_{\a_k}$ for $1\le k \le s$.
Therefore, we have $X''' \succ_{\rm p.lexmin} X$, and we are done.
\begin{figure}[t]
 \begin{tabular}{cc}
 \begin{minipage}[t]{0.5\hsize}
  \begin{center}
   \includegraphics[width=60mm]{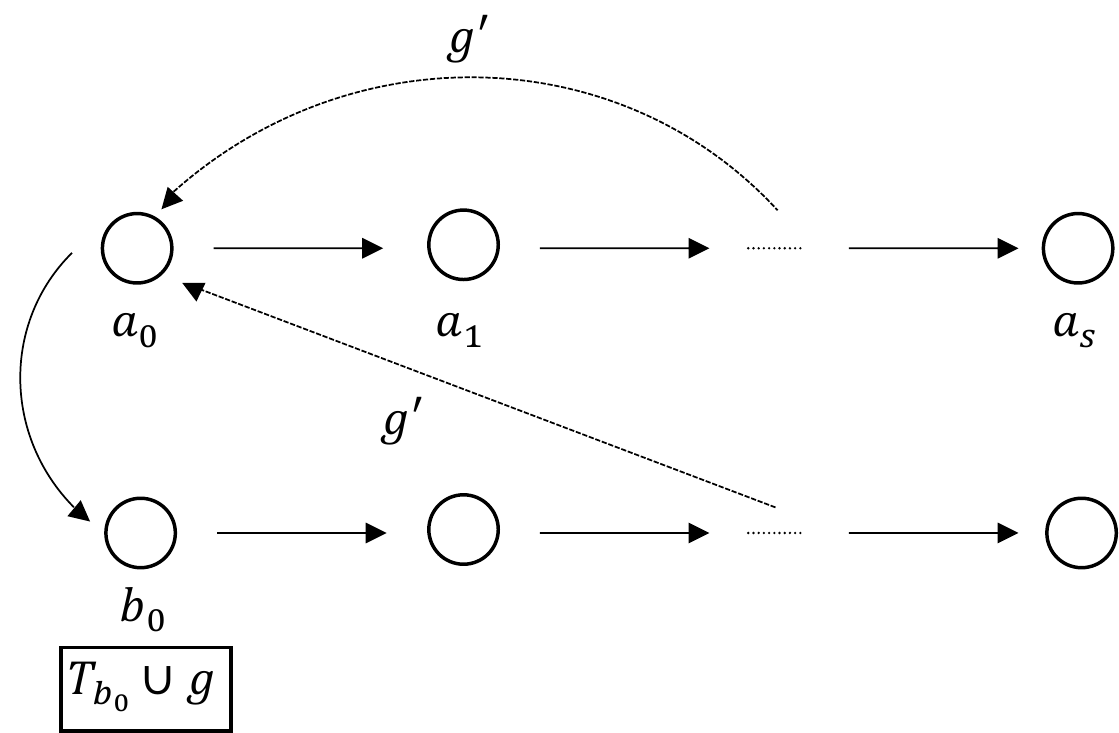}
   \captionsetup{width=.95\linewidth}
   \caption{The champion graph $M_{X''}$ (the edges are only partially drawn) in Case 2. Every agent other than $a_0$ is envied by some other agents, and some agent $g'$-champions $a_0$ in $M_{X''}$.}
  \label{fig: 2}
   \end{center}
 \end{minipage}
 \hfill
 \begin{minipage}[t]{0.5\hsize}
   \begin{center}
   \includegraphics[width=60mm]{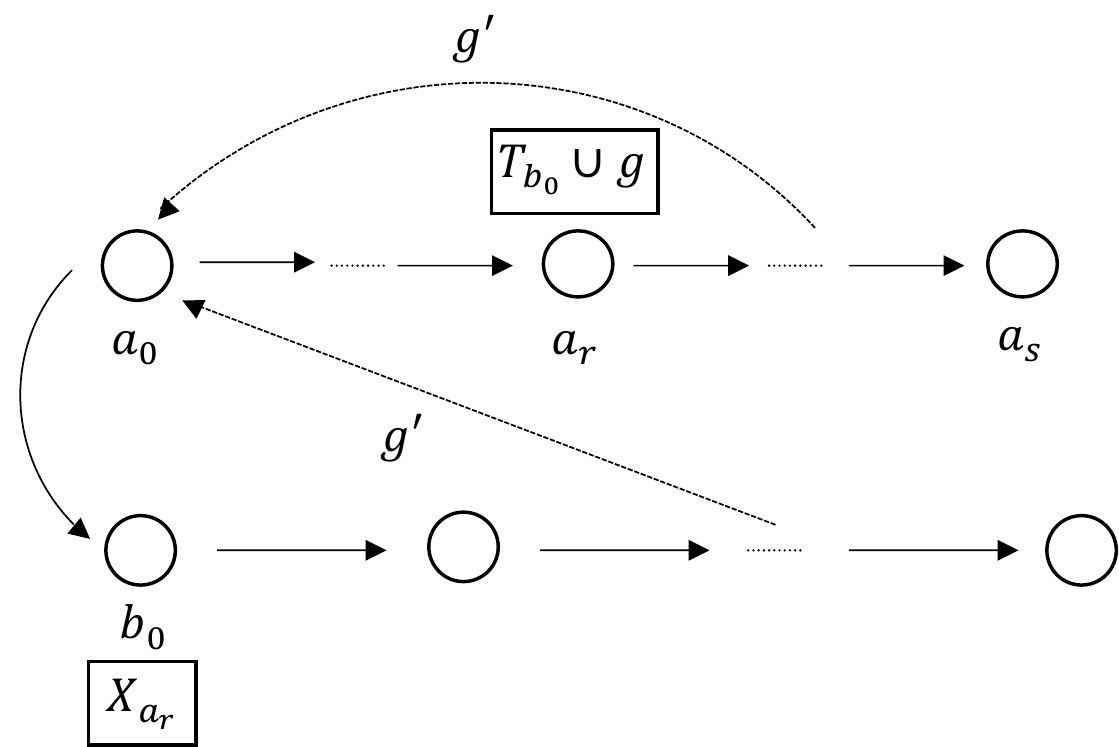}
   \captionsetup{width=.95\linewidth}
   \caption{The champion graph $M_{X''}$ (the edges are only partially drawn) in the case where $U_{a_r}=X_{a_r}$ in Case 3 . Every agent other than $a_0$ is envied by some other agents, and some agent $g'$-champions $a_0$ in $M_{X''}$.}
     \label{fig: 3}
    \end{center}
 \end{minipage}\\

   \begin{minipage}[t]{0.5\hsize}
  \begin{center}
   \includegraphics[width=55mm]{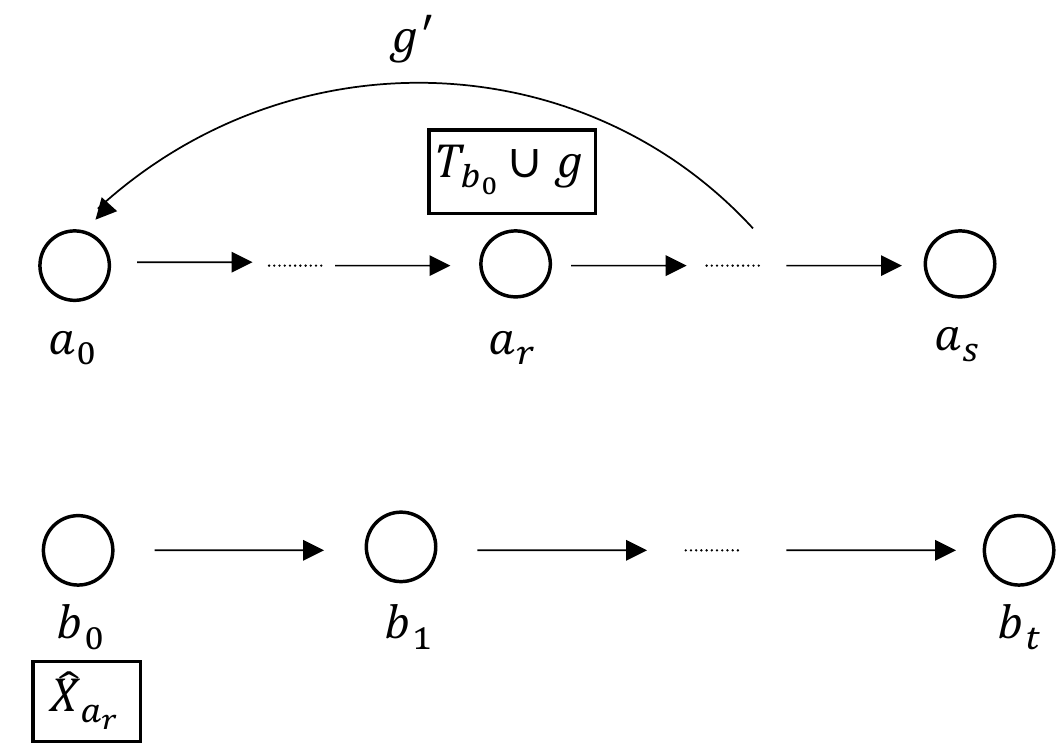}
   \captionsetup{width=.95\linewidth}
   \caption{The champion graph $M_{X'}$ (the edges are only partially drawn) in the case where $U_{a_r}=\hat{X}_{a_r}$, and some agent in $N_a$ $g'$-champions $a_0$ in Case 3.}
  \label{fig: 4}
   \end{center}
 \end{minipage}
  \hfill
  \begin{minipage}[t]{0.5\hsize}
  \begin{center}
   \includegraphics[width=60mm]{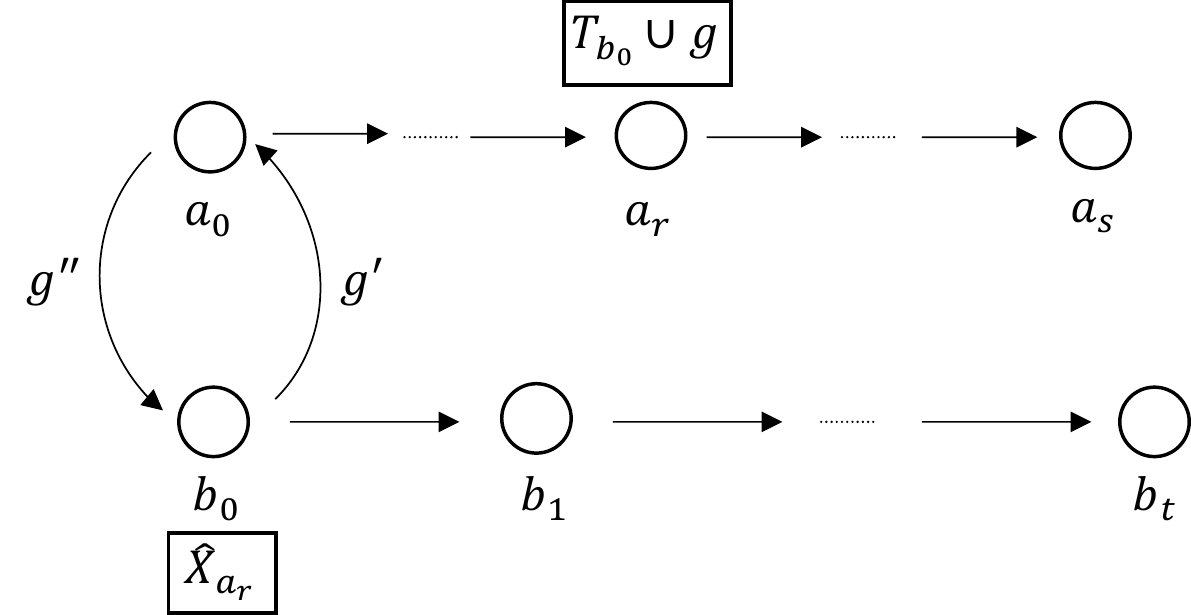}
   \captionsetup{width=.95\linewidth}
   \caption{The champion graph $M_{X'}$ (the edges are only partially drawn) in the case where $U_{a_r}=\hat{X}_{a_r}$, and some agent in $N_b$ $g'$-champions $a_0$ in Case 3. $b_0$ $g'$-champions $a_0$, and $a_0$ $g''$-champions $b_0$ in $M_{X'}$.}
  \label{fig: 5}
   \end{center}
 \end{minipage}
 \end{tabular}
\end{figure}
\item[Case 3:] $Z=U_{\a_r}$ for some $1\le r \le s$

In this case, if $U_{a_r}=X_{a_r}$, then since we have $X''_{a_0}=X'_{a_0}=X_{a_0} <_a X_{a_r} = X'_{b_0}=X''_{b_0}$, $a_0$ envies $b_0$ in $X''$.
By the fact that $B_{b_0}\neq \emptyset$, there is an unallocated item $g'\in B_{b_0}$.
In a similar way to Case $2$, the fact that some agent $g'$-champions $a_0$ implies that there exists a PI cycle containing $a_0$ in $M_{X''}$ (see Figure~\ref{fig: 3}).
By Lemma~\ref{lem: PI}, there exists an EFX allocation $X'''$ such that $X'''_{\a_0}>_a X''_{\a_0}$ and $X'''_{\a_k}\ge_a X''_{\a_k}$ for $1\le k \le s$.
Thus, we have $X'''_{\a_k}\ge_a X''_{\a_k} = X'_{a_k} = X_{a_k} >_a X_{a_0}$ for $1\le k \le s$ with $k\neq r$, and $X'''_{a_r} \ge_a X''_{a_r} =X'_{a_r} = T_{b_0} \cup g >_a X_{a_0}$.
That is, $X'''_{a_i} >_a X_{a_0}$ for $0\le i \le s$.
Therefore, we have $X''' \succ_{\rm p.lexmin} X$, and we are done.

If $U_{a_r}=  \hat{X}_{a_r}$, then we consider semi-EFX allocation $X'$, not $X''$ in this case.
some agent $l$ $g'$-champions $a_0$ in $M_{X'}$.
If $l \in N_a$, then by following agents in $N_a$ backwards we obtain a PI cycle $a_0 \rightarrow\cdots \rightarrow  l\xrightarrow{g'} a_0$ in $M_{X''}$ (see Figure~\ref{fig: 4}).
By Lemma~\ref{lem: PI}, there exists a semi-EFX allocation $X'''$ such that $X'''_{\a_0}>_a X'_{\a_0}$ and $X'''_{\a_k} \ge_a X'_{a_k}$ for $1\le k\le s$.
By a similar argument as above, we have $X'''_{a_i} >_a X_{a_0}$ for $0\le i \le s$.
By Lemma~\ref{lem: semi}, there exists an EFX allocation $X''''$ such that $X''''_{a_i}=X'''_{a_i}$ for $0\le i \le s$.
Therefore, we have $X''''_{a_i} >_a X_{a_0}$ for $0\le i \le s$, thus we have $X'''' \succ_{\rm p.lexmin} X$, and we are done.

If $l =b_j \in N_b$, 
then since $X'_{b_0} \le_b X'_{b_j}$, we have $\kappa_{X'}(b_0, X'_{a_0}\cup g') \le \kappa_{X'}(b_j, X'_{a_0}\cup g')=\kappa_{X'}(X'_{a_0}\cup g')$.
Thus $b_0$ $g'$-champions $a_0$ in $X'$.
Since $U_{a_r}=\hat{X}_{a_r}$, there exists an unallocated item $g''\in X_{a_r} \setminus \hat{X}_{a_r}$.
Note that we have $X_{a_r}=X'_{b_0}\cup g''$.
We claim that $a_0$ $g''$-champions $b_0$ in $X'$.
Indeed, since any agent $u\in N \setminus \{a_r, b_0\}$ does not change her bundle, and $X$ is EFX, $u$ does not EFX envy $X_{a_r}=X'_{b_0}\cup g''$ in $X'$.
In addition, since $X'_{a_r} =T_{b_0} \cup g >_a X_{a_0}$, and $a_0$ does not EFX envy $X_{a_r}$ in $X$, $a_r$ does not EFX envy $X_{a_r}$ in $X'$.
Furthermore, since $X'_{b_0}=U_{a_r}=  \hat{X}_{a_r}$ is a maximum cardinality proper subset of $X_{a_r}$ maximizing $v_b( \hat{X}_{a_r})$, $b_0$ does not EFX envy $X_{a_r}$ in $X'$.
To sum up, for any proper subset $S$ of $X'_{b_0}\cup g''$, any agent in $N$ does not envy $S$ in $X'$.
Furthermore, since we have $X'_{a_0} =X_{a_0} <_a X_{a_r}=X'_{b_0} \cup g''$, 
$a_0$ envies $X'_{b_0} \cup g''$ in $X'$.
Thus, since $\kappa_{X'}(a_0, X'_{b_0}\cup g'') = |X'_{b_0} \cup g''| \le \kappa_{X'}(w, X'_{b_0}\cup g'')$ for any $w\in N$, $a_0$ is a most envious agent for $X'_{b_0}\cup g''$.
That is, $a_0$ $g''$-champions $b_0$ in $X'$ (see Figure~\ref{fig: 5}).
We now have a PI cycle $a_0 \xrightarrow{g''} b_0 \xrightarrow{g'} a_0$ in $M_{X'}$, and by Lemma~\ref{lem: PI}, we obtain a semi-EFX allocation $X'''$ such that $X'''_{a_0} >_a X'_{a_0}$ and $X'''_{a_k} = X'_{a_k}$ for $1\le k \le s$.
By a similar argument as above, we have $X'''_{a_i} >_a X_{a_0}$ for $0\le i \le s$.
By Lemma~\ref{lem: semi}, there exists an EFX allocation $X''''$ such that $X''''_{a_i}=X'''_{a_i}$ for $0\le i \le s$.
Therefore, we have $X''''_{a_i} >_a X_{a_0}$ for $0\le i \le s$, thus we have $X'''' \succ_{\rm p.lexmin} X$, and we are done.
\end{description}
\end{proof}

\section{Existence of EFX with at most $n+3$ items}
\label{sec: m=n+3}
In this section, we investigate the setting with at most $n+3$ items, and prove Theorem~\ref{thm: m=n+3}.
We assume that $m\le n+3$ in this section.
We use the lexicographic potential function as in Section~$\ref{sec: n-2}$.
By Lemma~\ref{lem: progress}, in order to prove Theorem~\ref{thm: m=n+3}, it suffices to show that for every partial EFX allocation $X$ with at least $1$ unallocated item, there exists an EFX allocation $Y$ such that $Y\succ_{\rm lex} X$.

Given an allocation, 
let an {\it 1-bundle} be a bundle such that the cardinality is exactly one.
Let an {\it 1-agent} be an agent who has an 1-bundle.
Note that since valuation is normalized, no agent EFX envies 1-agents.
The following lemma shows that in a partial allocation $X$, if $M_X$ has no PI cycles, then there are exactly two non-1-agents.
Moreover, such non-1-agents have bundles of cardinality two, which implies that $m=n+3$ and there is one unallocated item in $X$.
\begin{lemma}\label{lem: 2}
Let $X$ be a partial allocation.
If $M_X$ has no PI cycles, then there are exactly two non-1-agents.
Furthermore, every 1-agent is envied by some other agents, and every non-1-agent is not envied by any agents.
\end{lemma}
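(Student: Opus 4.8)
The plan is to exploit the tight item budget $m \le n+3$ together with the absence of PI cycles. Since $X$ is partial with at least one unallocated item, we have at most $n+2$ allocated items distributed among $n$ agents. First I would record two structural facts that follow from the hypothesis that $M_X$ has no PI cycles. By Corollary~\ref{cor: PI}, the absence of PI cycles rules out envy-cycles, self $g$-champions, and cycles composed of envy edges plus at most one champion edge. In particular, the envy graph (the sub-multigraph of envy edges) is acyclic, so it has a sink: an agent whom nobody envies. I would also use the champion structure: every agent $j$ has a $g$-champion for each unallocated item $g$, which gives champion edges into every vertex.

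\textbf{Identifying non-1-agents via a weight count.}

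The heart of the argument is a counting bound on how many agents can hold bundles of size $\ge 2$. If every agent held a $1$-bundle, the $n$ agents would use exactly $n$ items, leaving $3$ items unallocated (since $m \le n+3$); I would argue that with at least $n$ unallocated items impossible here, but more usefully that extra unallocated items force a PI cycle. Concretely, each non-1-agent holds $\ge 2$ items and each 1-agent holds exactly $1$, so if there are $p$ non-1-agents the allocated items number at least $n + p$ (each non-1-agent contributing at least one extra item beyond the baseline of one-per-agent), hence $n + p \le m - 1 \le n+2$, giving $p \le 2$. To show $p = 2$ exactly (not $0$ or $1$), I would invoke the long-path argument from Lemma~\ref{lem: one edge}: if too many items are unallocated relative to the number of agents absorbing them, one chains champion edges $a_n \xrightarrow{g_{n-1}} \cdots \xrightarrow{g_1} a_1$ and closes it into a PI cycle, contradicting the hypothesis. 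The main obstacle I anticipate is precisely pinning down why $p$ cannot be $0$ or $1$: one must show that having $0$ or $1$ non-1-agents leaves enough unallocated items (at least $2$ or $3$) to build a PI cycle through disjoint champion edges, which requires carefully matching the number of distinct unallocated items against the number of agents one needs to traverse.

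\textbf{The envy structure of 1-agents versus non-1-agents.}

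For the final two assertions I would argue as follows. Since no agent EFX envies a 1-agent (valuations are normalized, so removing the single item drops the bundle value to $0 \le_i X_i$), the only possible \emph{envy} edges into a 1-agent are genuine envy edges, and I must show each 1-agent actually \emph{is} envied. I would use that the envy graph is acyclic with exactly the non-1-agents available as potential sinks: taking the two non-1-agents and building maximal envy-paths, any 1-agent that were a sink (envied by nobody) could be used, together with a champion edge supplied by an unallocated item, to form a cycle with at most one champion edge — precisely what Corollary~\ref{cor: PI} forbids. Hence every 1-agent is envied. Conversely, I would show a non-1-agent cannot be envied: if some agent envied a non-1-agent $w$, then combining this envy edge with the path/champion structure would again close a forbidden cycle (an envy-cycle or an envy-path-plus-one-champion-edge), contradicting the no-PI-cycle hypothesis. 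Assembling these gives exactly two non-1-agents, each of cardinality two (since $n + 2 = m - 1$ forces the two extra items to split as one each), so $m = n+3$ with a single unallocated item, and the envy assertions follow.
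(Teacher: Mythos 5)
Your counting step ($p \le 2$ non-1-agents) matches the paper's, but it silently assumes every agent holds at least one item: an agent with an \emph{empty} bundle is a non-1-agent contributing zero items, which breaks your ``baseline of one per agent.'' The paper closes this first: if $X_i=\emptyset$ then for any unallocated $g$ we have $\{g\}>_i X_i$ with $\kappa_X(i,X_i\cup g)=1$, so $i$ is a self $g$-champion, i.e.\ a PI cycle. That is minor and fixable. The genuine gap is in the part you yourself flag as the ``main obstacle'': ruling out $p\in\{0,1\}$ and showing every 1-agent is envied. Your plan --- chain pairwise-disjoint champion edges as in Lemma~\ref{lem: one edge} --- cannot work here, because at this point nothing guarantees more than \emph{one} unallocated item (only after the lemma is proved does $m=n+3$ with exactly one unallocated item follow), and a PI cycle may use each unallocated item at most once; with at most three distinct unallocated items and no envy edges in hand, champion edges alone cannot be forced to close into a cycle through the agents. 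Likewise, your claim that an unenvied 1-agent ``together with a champion edge'' yields a cycle with at most one champion edge begs the question: to close that cycle you need an envy path from the 1-agent back to its champion, and you have not produced a single envy edge.

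The missing idea is the paper's trichotomy of minimum preferred sets over a two-element ground set. If $j$ $g$-champions a 1-agent $i$, the minimum preferred set $P$ of $j$ satisfies $P\subseteq X_i\cup g$ with $|X_i\cup g|=2$, so $P\in\{\{g\},\,X_i,\,X_i\cup g\}$; the first option makes $j$ a self $g$-champion and the last makes $i$ a self $g$-champion, both PI cycles, hence $P=X_i$ and $j$ \emph{envies} $i$. This single observation converts a champion edge into a genuine envy edge at every 1-agent, using only one unallocated item, and it powers every remaining case: if $p=0$, every agent has an incoming envy edge and an envy-cycle results; if $p=1$ (and, for the last assertion, if the wrong non-1-agent is envied), every agent except one non-1-agent has an incoming envy edge, and a $g$-champion edge into that agent, traced back along envy edges, closes a cycle with at most one champion edge --- all forbidden by Corollary~\ref{cor: PI}. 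Your EFX-envy remark (removing the lone item leaves value $0$) addresses the wrong relation --- what is needed is plain envy \emph{into} 1-agents, not the absence of EFX envy --- so without the trichotomy argument the proof does not go through.
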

\begin{proof}
Since there is no self-champion in $M_X$, every agent has at least one item.
Thus, since there is at least one unallocated item, the fact that $m\le n+3$ implies that the number of non-1-agents is at most two.
We claim that for any 1-agent $i$, some agent envies $i$.
Indeed, for any unallocated item $g$, some agent $j$ $g$-champions $i$.
Let $P$ be a minimum preferred set of $j$ for $X_i\cup g$.
Since both $i$ and $j$ are not self $g$-champions, we obtain $P\neq \{g\}$ and $X_i\cup g$.
Hence, $P=X_i$, and thus $j$ envies $i$.
If there are no non-1-agents, 
then every agent is envied by some other agents.
In other words, every agent has an in-coming envy edge, and thus we have an envy-cycle in $M_X$, and this contradicts the assumption.
If there is exactly one non-1-agent, then
for any unallocated item $g$, some agent $g$-champions the non-1-agent.
No matter who it is, 
we have a PI cycle in $M_X$, and this contradicts the assumption as well.
Therefore, we conclude that the number of non-1-agents is exactly two, and every 1-agent is envied by some other agents.
For the last statement of the lemma, if both of non-1-agents are envied by some other agents, then every agent has an in-coming envy edge, and thus we have an envy-cycle in $M_X$.
This contradicts the assumption.
If exactly one of non-1-agents are envied by some other agents, then every agent other than another non-1-agent has an in-coming envy edge.
For any unallocated item $g$, some agent $g$-champions the non-1-agent.
No matter who it is, 
we have a PI cycle in $M_X$, and this contradicts the assumption.
\end{proof}
For a dicycle $C=a_1\rightarrow a_2\rightarrow \cdots \rightarrow a_k \rightarrow a_1$, let $\mathbf{succ}(a_i)$ and $\mathbf{pred}(a_i)$ denote the successor and predecessor of $a_i$ along the cycle, respectively.
Let $V[C]$ be all vertices on $C$.
For vertices $i, j$ on $C$, Let $[i,j]$ denote a set of all vertices on $P^C_{i,j}$, where $P^C_{i,j}$ is the unique $(i,j)$-dipath on $C$.
We are now ready to prove Theorem~\ref{thm: m=n+3}.
We fix an arbitrary ordering of the agents.
\begin{proof}[Proof of Theorem~\ref{thm: m=n+3}]
Let $X$ be a partial allocation.
It suffices to show that there exists an EFX allocation $Y$ such that $Y \succ_{\rm lex} X$ by Lemma~\ref{lem: progress}.
If $M_X$ has a PI cycle, then there exists an EFX allocation $Y$ that Pareto dominates $X$ by Lemma~\ref{lem: PI}, and we are done.
Assume that $M_X$ has no PI cycles.
Then, by Lemma~\ref{lem: 2}, the number of non-1-agents is exactly two.
In this situation, we have one unallocated item $g$, and non-1-agents have bundles of cardinality two.
Since every 1-agent has in-coming envy edges in $M_X$ by Lemma~\ref{lem: 2} and every non-1-agent has in-coming $g$-champion edges in $M_X$, 
these edges induce a dicycle $C$ such that the number of champion edges in $C$ is at most two.
If the dicycle $C$ has at most one champion edge, 
then by Corollary~\ref{cor: PI}, we have a PI cycle, and we are done.
Suppose that the number of champion edges in $C$ is exactly two.
In this situation, $C$ has exactly two $g$-champion edges $i_1\xrightarrow{g} i_2$ and $i_3\xrightarrow{g} i_4$ (possibly, $i_1=i_4$ and/or $i_2=i_3$), and $i_2$ and $i_4$ are non-1-agents.
Let $i\in \{i_1, i_2, i_3, i_4\}$ be the last agent with the fixed ordering.
That is, $i$ is the least important agent among agents who are endpoints of champion edges in $C$.
We discuss in two cases.
\begin{figure}[t]
 \begin{tabular}{cc}
 \begin{minipage}[t]{0.5\hsize}
  \begin{center}
   \includegraphics[width=70mm]{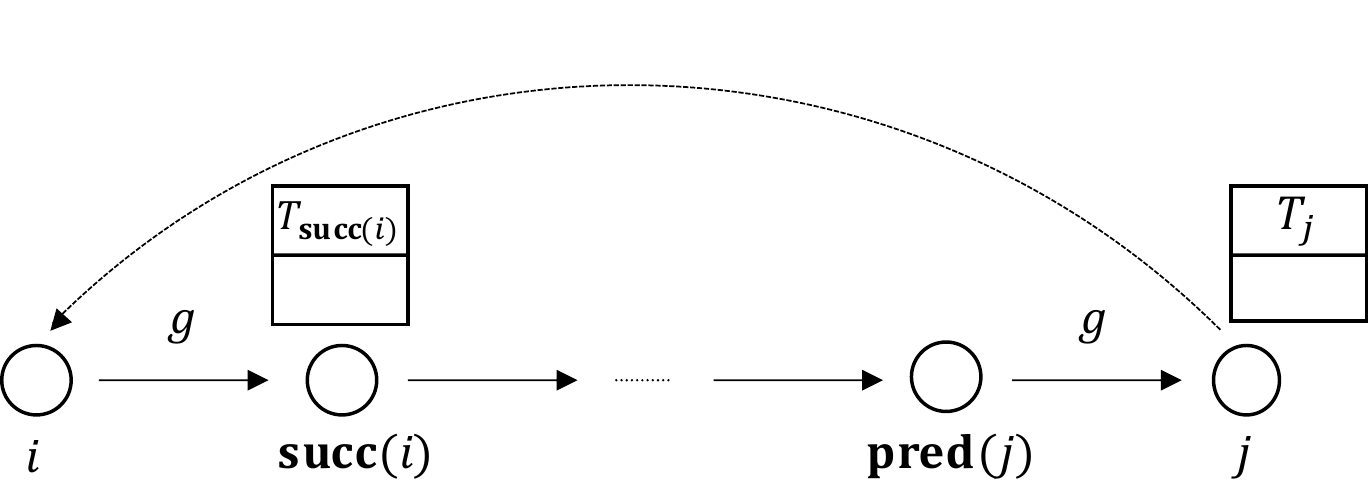}
   \captionsetup{width=.95\linewidth}
   \caption{The dicycle $C$ in Case 1. $i$ $g$-decomposes $\mathbf{succ}(i)$, and $\mathbf{pred}(j)$ $g$-decomposes $j$. The dotted arrow represents $(j, i)$-dipath on $C$.}
  \label{fig: 6}
   \end{center}
 \end{minipage}
 \hfill
 \begin{minipage}[t]{0.5\hsize}
   \begin{center}
   \includegraphics[width=70mm]{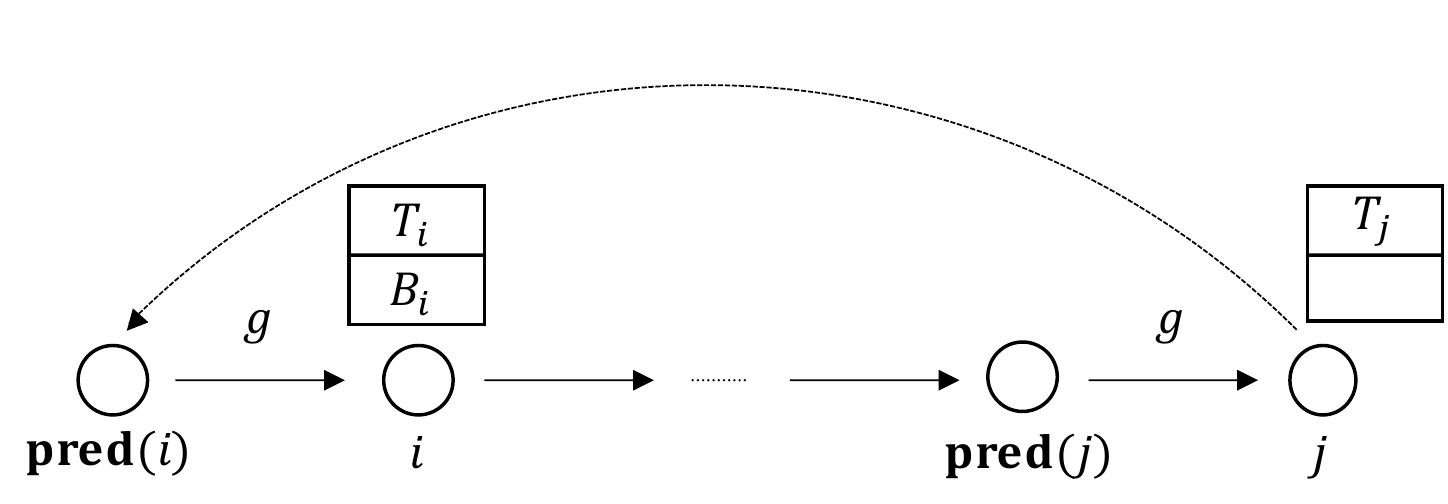}
   \captionsetup{width=.95\linewidth}
   \caption{The dicycle $C$ in Case 2. $\mathbf{pred}(i)$ $g$-decomposes $i$, and $\mathbf{pred}(j)$ $g$-decomposes $j$. The dotted arrow represents $(j, \mathbf{pred}(i))$-dipath on $C$.}
     \label{fig: 7}
    \end{center}
 \end{minipage}

 \end{tabular}
\end{figure}

\begin{description}
\item[Case 1:] $i$ has an in-coming envy edge in $C$

Since $i$ is envied by some other agents, $i$ is an $1$-agent by Lemma~\ref{lem: 2}.
Since $i$ is an endpoint of champion edges in $C$, $i$ $g$-champions $\mathbf{succ}(i)$, and $\mathbf{succ}(i)$ is a non-1-agent.
Let $j$ be a non-1-agent who is not $\mathbf{succ}(i)$.
Note that $j$ is before $i$ in the fixed ordering.
Since there are no envy edges $i\rightarrow \mathbf{succ}(i)$ and $\mathbf{pred}(j) \rightarrow j$ (possibly, $\mathbf{succ}(i)=\mathbf{pred}(j)$), and there are no self $g$-champions in $M_X$, 
by Lemma~\ref{lem: decompose}, $i$ $g$-decomposes $\mathbf{succ}(i)$, and $\mathbf{pred}(j)$ $g$-decomposes $j$.
Let $T_{\mathbf{succ}(i)}$ and $T_{j}$ be the top half-bundles of $X_{\mathbf{succ}(i)}$ and $X_j$, respectively (see Figure~\ref{fig: 6}).
Consider $Z=\max_{i} \{X_{\mathbf{succ}(i)}, T_j\cup g\}$.
\begin{description}
\item[Case 1-1:] $Z=T_j\cup g$

We define a new allocation $X'$ as follows: 
\begin{align*}
X'_k &= \left\{ 
\begin{array}{ll}
Z & {\rm if}~k=i, \\
X_{\mathbf{succ}(k)} & {\rm if}~k\in [j, \mathbf{pred}(i)],\\
X_{k} & {\rm otherwise}.
\end{array}
 \right. &
\end{align*}
We show that $X'$ is EFX.
Note that only $i$ and $\mathbf{succ}(i)$ are non-1-agents in $X'$.
Since no agent EFX envies 1-agents, it suffices to check that no agent EFX envies $X'_i=T_j \cup g$ and $X'_{\mathbf{succ}(i)}=X_{\mathbf{succ}(i)}$.
Since any agent other than $i$ is not worse off than in $X$, no agent EFX envies $T_j\cup g$ in $X$, and $X$ is EFX, 
any agent other than $i$ does not EFX envy $T_j \cup g$ and $X_{\mathbf{succ}(i)}$ in $X'$.
Furthermore by the definition of $Z$, $i$ does not envy $X_{\mathbf{succ}(i)}$ in $X'$.
Therefore $X'$ is EFX.
Furthermore, since any agent other than $i$ is not worse off, and $j$ before $i$ in the fixed ordering is strictly better off, we conclude that $X' \succ_{\rm lex} X$, and we are done.

\item[Case 1-2:] $Z=X_{\mathbf{succ}(i)}$

We define a new allocation $X'$ as follows: 
\begin{align*}
X'_k &= \left\{ 
\begin{array}{ll}
T_j\cup g & {\rm if}~k=\mathbf{pred}(j), \\
X_{\mathbf{succ}(k)} & {\rm if}~k\in V[C] \setminus \mathbf{pred}(j),\\
X_{k} & {\rm otherwise}.
\end{array}
 \right. &
\end{align*}

We show that $X'$ is EFX.
It suffices to check that no agent EFX envies $X'_{\mathbf{pred}(j)}=T_j \cup g$ and $X'_i=X_{\mathbf{succ}(i)}$.
Since any agent other than $i$ is not worse off than in $X$, no agent EFX envies $T_j\cup g$ in $X$, and $X$ is EFX, 
any agent other than $i$ does not EFX envy $T_j \cup g$ and $X_{\mathbf{succ}(i)}$ in $X'$.
Furthermore by the definition of $Z$, $i$ does not envy $T_j\cup g$ in $X'$.
Therefore $X'$ is EFX.
Furthermore, since any agent other than $i$ is not worse off than in $X$, and $j$ before $i$ in the fixed ordering is strictly better off, we conclude that $X' \succ_{\rm lex} X$, and we are done.
\end{description}

\item[Case 2:] $i$ has an in-coming $g$-champion edge in $C$.

In this case, $i$ is a non-1-agent.
Let $j$ be a non-1-agent who is not $i$.
Since there are no envy edges $\mathbf{pred}(i) \rightarrow i$ and $\mathbf{pred}(j)\rightarrow j$ (possibly, $i=\mathbf{pred}(j)$ and/or $j=\mathbf{pred}(i)$), and there are no self $g$-champions in $M_X$, by Lemma~\ref{lem: decompose}, $\mathbf{pred}(i)$ $g$-decomposes $i$, and $\mathbf{pred}(j)$ $g$-decomposes $j$.
Let $T_{i}$ and $T_{j}$ be the top half-bundles of $X_i$ and $X_j$, respectively, and let $B_i$ be the bottom half-bundles of $X_i$ (see Figure~\ref{fig: 7}).
Consider $Z=\max_{i} \{T_i \cup g, X_j\}$.

\begin{description}
\item[Case 2-1:] $Z=T_i \cup g$

We define a new allocation $X'$ as follows: 
\begin{align*}
X'_k &= \left\{ 
\begin{array}{ll}
Z & {\rm if}~k=i, \\
X_{k} & {\rm if}~k\neq i.
\end{array}
 \right. &
\end{align*}
We show that $X'$ is EFX.
It suffices to check that no agent EFX envies $X'_i=T_i \cup g$ and $X'_j=X_j$.
Since any agent other than $i$ does not change her bundle, no agent EFX envies $T_j\cup g$ in $X$, and $X$ is EFX, any agent other than $i$ does not EFX envy $T_i \cup g$ and $X_j$ in $X'$.
Furthermore by the definition of $Z$, $i$ does not envy $X_j$ in $X'$.
Therefore $X'$ is EFX.
In addition, since $T_i \cup g$ is a minimum preferred set of $\mathbf{pred}(i)$, $\mathbf{pred}(i)$ envies $i$ in $X'$.
Thus, any agent other than $j$ is envied by some other agents in $X'$.
We now have an unallocated item $g'\in B_{i}$.
Some agent $u$ $g'$-champions $j$.
If $u\in V[C]$, then by following agents on $V[C]$ backwards we have a PI cycle $j \rightarrow \cdots \rightarrow \mathbf{pred}(u) \rightarrow u \xrightarrow{g'} j$ in $M_{X'}$.
If $u\not\in V[C]$, then since there is no envy-cycle in $M_X$, and any agent other than $i$ does not change her bundle, there is no envy-cycle that does not contain $i$ in $M_{X'}$.
Thus, since any agent other than $j$ has an in-coming envy edge in $M_{X'}$, there exists agent $u'\in V[C]$ such that $u$ is reachable from $u'$ by following envy edges in $M_{X'}$. 
Then, by following agents in $V[C]$ from $u$ to $j$, we obtain a PI cycle $j\rightarrow \cdots \rightarrow u' \rightarrow \cdots \rightarrow u \xrightarrow{g'} j$ in $M_{X'}$
Therefore in any case, we have a PI cycle containing $j$ in $M_{X'}$, and thus there exists an EFX allocation $X''$ Pareto dominating $X'$ by Lemma~\ref{lem: PI}.
Agent $j$ is strictly better off than in $X'$, i.e., $X''_j >_j X'_j = X_j$.
Since any agent other than $i$ is not worse off than in $X$, 
and $j$ before $i$ in the fixed ordering is strictly better off, we conclude that $X'' \succ_{\rm lex} X$, and we are done.
\item[Case 2-2:] $Z=X_j$

We define a new allocation $X'$ as follows: 
\begin{align*}
X'_k &= \left\{ 
\begin{array}{ll}
Z & {\rm if}~k=i, \\
T_i\cup g & {\rm if}~k=\mathbf{pred}(i), \\
X_{\mathbf{succ}(k)} & {\rm if}~k\in [j, \mathbf{pred}(i)] \setminus \mathbf{pred}(i),\\
X_{k} & {\rm otherwise}.
\end{array}
 \right. &
\end{align*}
We show that $X'$ is EFX.
It suffices to check that no agent EFX envy $X'_{\mathbf{pred}(i)}=T_i \cup g$ and $X'_i=X_j$.
Since any agent other than $i$ does not change her bundle, no agent EFX envies $T_j\cup g$ in $X$, and $X$ is EFX, any agent other than $i$ does not EFX envy $T_i \cup g$ and $X_j$ in $X'$.
Furthermore by the definition of $Z$, $i$ does not envy $T_i \cup g$ in $X'$.
Therefore $X'$ is EFX.
Furthermore, since any agent other than $i$ is not worse off than in $X$, and $j$ before $i$ in the fixed ordering is strictly better off, we conclude that $X' \succ_{\rm lex} X$, and we are done.
\end{description}
\end{description}
\end{proof}

\section{Limitations of the Lexicographic Potential Function}
\label{sec: limit}
In Section~\ref{sec: n-2} and~\ref{sec: m=n+3}, 
in order to prove the existence of EFX, we show that given a partial EFX allocation $X$, there exists an EFX allocation $Y$ such that $Y \succ_{\rm lex} X$.
Recently, Chaudhury et al.~\cite{chaudhury2021improving} have shown that there does not always exist a lexicographically larger EFX allocation when $n=4$ for additive valuations.
In this section, we show that there does not always exist a lexicographically larger EFX allocation when $n=3$ for general valuations.
Thus, the approach using the lexicographic potential funtion is not sufficient to show the existence of EFX even when $n=3$ for general valuations.

The following theorem shows that there exist an instance and a partial EFX allocation $X$ such that no complete EFX allocation $Y$ such that $Y\succ_{\rm lex} X$.
\begin{theorem}
There exist an instance $I$ with three agents, $\{1,2,3\}$ with general valuations, seven items $\{g_i\mid i \in [7]\}$, and a partial EFX allocation $X$, such that in every complete EFX allocation, the valuation of agent $1$ will be strictly worse off than in $X$.
\end{theorem}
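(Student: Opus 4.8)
The plan is to prove the statement by explicit construction: I would exhibit three monotone, normalized valuation functions $v_1, v_2, v_3$ on the seven items, together with a partial EFX allocation $X$ that leaves exactly one item unallocated and gives agent $1$ a high-value bundle, and then show that the EFX constraints force every \emph{complete} allocation that is EFX to give agent $1$ a bundle of value strictly less than $v_1(X_1)$. Since $m = n+4$ with $n=3$, the partial allocation $X$ will have six allocated items (say two per agent, which is within my control as the constructor) and one unallocated item; the whole difficulty will lie in controlling what happens when that last item must be placed.

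The design principle I would follow is to exploit non-additivity to build a \emph{trap} around one distinguished item $g^{*}$. I would arrange the valuations so that (i) in $X$ agent $1$ holds a pair $X_1$ that agent $1$ values highly and that no agent EFX-envies, and (ii) whichever agent ends up holding $g^{*}$ in a complete allocation, EFX forces that agent's bundle into a shape that propagates constraints making it impossible to simultaneously give all of $X_1$ to agent $1$. Concretely, I would make $g^{*}$ a strong \emph{complement} for agent $1$ with one of the items of $X_1$, so that agent $1$ reaches high value only by holding that pair \emph{without} $g^{*}$, while every way of clearing $g^{*}$ into a complete allocation forces some agent to EFX-envy agent $1$ unless $X_1$ is broken up. Non-additivity is essential here: with additive valuations this all-or-nothing dependence cannot be created, which is exactly why the counterexample must leave the additive regime.

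The verification then has two parts. First I would check directly that $X$ is EFX, a finite bounded computation: for each ordered pair $(i,j)$ and each $g \in X_j$ confirm $v_i(X_i) \ge v_i(X_j \setminus g)$, and that no agent EFX-envies the singleton unallocated item. Second, and this is the crux, I would show that every complete EFX allocation $Y$ satisfies $v_1(Y_1) < v_1(X_1)$. The clean way to do this is to choose the valuations so that the EFX requirement is extremely restrictive: I would tune the numbers so that only a short, explicitly enumerable list of partitions of the seven items can possibly be EFX, and then verify one by one that each such $Y$ assigns agent $1$ a bundle worth strictly less than $v_1(X_1)$.

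The step I expect to be the main obstacle is precisely this exhaustiveness: ruling out \emph{all} complete EFX allocations, not merely the natural completions of $X$. Because general valuations are specified on all $2^{7}$ subsets there is great freedom but also much to track, and a naive enumeration of partitions of seven items into three bundles is large. I would manage this by making the monotone valuations as rigid as possible, forcing most items to be near-worthless to most agents so that almost every partition is killed immediately by a single EFX violation, thereby collapsing the analysis to a handful of candidate allocations, all of which demote agent $1$. Getting these numerical gaps to hold simultaneously while preserving monotonicity and non-degeneracy is the delicate bookkeeping on which the construction ultimately rests.
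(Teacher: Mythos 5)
There is a genuine gap, and it is the whole of the matter: the theorem is a constructive existence statement, so its proof \emph{is} an explicit instance plus the verification, and you supply neither. You describe design desiderata (a trapped item $g^{*}$, rigid valuations, a short list of surviving partitions) but exhibit no valuation functions, no partial allocation $X$, and you carry out neither verification step. The step you yourself flag as the crux --- showing that \emph{every} complete EFX allocation demotes agent $1$, not just the natural completions of $X$ --- is left entirely open, with only a hope that ``tuning the numbers'' will collapse the case analysis. As it stands this is a research plan, not a proof, and it cannot be credited with the statement.

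For comparison, the paper's construction also resolves two points where your sketch points in an unpromising direction. First, the paper makes agent $1$ \emph{additive} ($v_1(g_1)=v_1(g_2)>0$, all other items worthless to her), with $X=(\{g_1,g_2\},\{g_3,g_4\},\{g_5,g_6\})$ and $g_7$ unallocated; all the non-additivity is placed in agents $2$ and $3$, not in a complement inside agent $1$'s own valuation as you propose. Their valuations are pinned down only by a short list of inequalities: every singleton and almost every pair is worth less than $\{g_1\}$ alone, with a few exceptional pairs ($\{g_3,g_4\}$ and $\{g_5,g_7\}$ for agent $2$; $\{g_5,g_6\}$ and $\{g_3,g_7\}$ for agent $3$) tied into a cyclic conflict by conditions such as $\{g_3,g_4\}<_2\{g_5,g_7\}$ and $\{g_3,g_4,g_6\}<_3\{g_1\}$. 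Second, the exhaustiveness you worried about is handled not by enumerating partitions of seven items but by a cardinality argument: if a complete EFX allocation $Y$ had $Y_1\supsetneq\{g_1,g_2\}$, one of agents $2,3$ would hold at most two items and would EFX envy agent $1$; hence $Y_1=\{g_1,g_2\}$ exactly, the singleton conditions force $|Y_2|,|Y_3|\ge 2$, and only the size profiles $(2,3)$ and $(3,2)$ with the two tolerated pairs each remain, every one of which is killed by the cyclic inequalities (e.g.\ $Y_2=\{g_3,g_4\}$ forces $Y_3=\{g_5,g_6,g_7\}$, which agent $2$ EFX envies). Note also that your parenthetical claim that additive valuations cannot create such a trap is too strong as stated: agent $1$ is additive here, and for $n=4$ Chaudhury et al.\ obtain an analogous lexicographic obstruction with purely additive valuations; non-additivity is needed only to achieve it at $n=3$.
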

\begin{proof}
We partially define the conditions of each agent's valuation function.
Assume that agent $1$ has an additive valuation satisfying the following conditions: 
$v_1(g_1)=v_1(g_2) > 0, v_1(g_i)=0$ for $3\le i \le 7$.
Agent $2$ has a general valuation satisfying following four conditions: 
\begin{enumerate}
\item[(1)] $\{g_i\} <_2 \{g_1\}$ for $2\le i \le 7$
\item[(2)] $\{g_i, g_j\} <_2 \{g_1\}$ for $2\le i < j \le 7$ and $(i,j)\neq (3,4), (5,7)$
\item[(3)] $\{g_4, g_5, g_6\} <_2 \{g_1\} <_2 \{g_3,g_4\} <_2 \{g_5, g_7\}$
\item[(4)] $\{g_5, g_7\} <_2 \{g_1, g_i\}$ for $2\le i  \le 7$
\end{enumerate}

Similarly, agent $3$ has a general valuation satisfying the following four conditions: 
\begin{enumerate}
\item[(1')] $\{g_i\} <_3 \{g_1\}$ for $2\le i \le 7$
\item[(2')] $\{g_i, g_j\} <_3 \{g_1\}$ for $2\le i < j \le 7$ and $(i,j)\neq (5,6), (3,7)$
\item[(3')] $\{g_3, g_4, g_6\} <_3 \{g_1\} <_3 \{g_5, g_6\} <_3 \{g_3, g_7\}$
\item[(4')] $\{g_3, g_7\} <_3 \{g_1, g_i\}$ for $2\le i  \le 7$
\end{enumerate}
Note that all conditions do not violate the monotonicity of valuation functions.
We now consider a partial allocation $X=(\{g_1,g_2\}, \{g_3,g_4\}, \{g_5,g_6\})$.
We can easily check that $X$ is an EFX allocation.
Consider any complete EFX allocation $Y$.
We show that $Y_1 <_1 X_1$.
Assume that $X_1 \le_1 Y_1$.
Then, it must be $\{g_1, g_2\} \subseteq Y_1$ by the definition of $1$'s valuation.
If $\{g_1, g_2\} \subsetneq Y_1$, then at least one of agents $2$ and $3$ has a bundle of size at most $2$.
If $|Y_2|\le 2$, then since $Y_2 <_2 \{g_1, g_2\}$ by (2), (3), and (4), agent $2$ EFX envies agent $1$.
This is a contradiction.
The similar argument holds when $|Y_3|\le 2$. 
Thus, we have $Y_1=\{g_1, g_2\}$.
Therefore, by (1) and (1'), both agent $2$ and agent $3$ have bundles of size at least $2$.
This implies that $|Y_2|=2$ and $|Y_3|=3$, or $|Y_2|=3$ and $|Y_3|=2$.

If $|Y_2|=2$ and $|Y_3|=3$, then
since agent $2$ does not EFX envy agent $1$, $Y_2=\{g_3, g_4\}$ or $\{g_5, g_7\}$ by (2).
If $Y_2=\{g_3, g_4\}$ then $Y_3=\{g_5, g_6, g_7\}$, and agent $2$ EFX envies agent $3$ by (3).
This is a contradiction.
If $Y_2=\{g_5, g_7\}$ then $Y_3=\{g_3, g_4, g_6\}$, and agent $3$ EFX envies agent $1$ by (3').
This is a contradiction.
The similar argument holds when $|Y_2|=3$ and $|Y_3|=2$. 
As a result, we conclude that $Y_1 <_1 X_1$, and thus the value of agent $1$ will be strictly worse off than in $X$.
\end{proof}

\subsection*{Acknowledgments}
The author would like to thank Yusuke Kobayashi for his generous support and useful discussion.
This work was partially supported by the joint project of Kyoto University and Toyota Motor Corporation, titled ``Advanced Mathematical Science for Mobility Society''.
\bibliography{efx}

\begin{thebibliography}{10}

\bibitem{amanatidis2021maximum}
Georgios Amanatidis, Georgios Birmpas, Aris Filos-Ratsikas, Alexandros
  Hollender, and Alexandros~A Voudouris.
\newblock Maximum {N}ash welfare and other stories about efx.
\newblock {\em Theoretical Computer Science}, 863:69--85, 2021.

\bibitem{amanatidis2017approximation}
Georgios Amanatidis, Evangelos Markakis, Afshin Nikzad, and Amin Saberi.
\newblock Approximation algorithms for computing maximin share allocations.
\newblock {\em ACM Transactions on Algorithms (TALG)}, 13(4):1--28, 2017.

\bibitem{amanatidis2020multiple}
Georgios Amanatidis, Evangelos Markakis, and Apostolos Ntokos.
\newblock Multiple birds with one stone: Beating 1/2 for {EFX} and {GMMS} via
  envy cycle elimination.
\newblock {\em Theoretical Computer Science}, 841:94--109, 2020.

\bibitem{anari2018nash}
Nima Anari, Tung Mai, Shayan~Oveis Gharan, and Vijay~V Vazirani.
\newblock {N}ash social welfare for indivisible items under separable,
  piecewise-linear concave utilities.
\newblock In {\em Proceedings of the 29th Annual ACM-SIAM Symposium on Discrete
  Algorithms (SODA)}, pages 2274--2290, 2018.

\bibitem{anari2017nash}
Nima Anari, Shayan Oveis~Gharan, Amin Saberi, and Mohit Singh.
\newblock Nash social welfare, matrix permanent, and stable polynomials.
\newblock In {\em 8th Innovations in Theoretical Computer Science Conference
  (ITCS)}, pages 1--12, 2017.

\bibitem{barman2017approximation}
Siddharth Barman and Sanath~Kumar Krishnamurthy.
\newblock Approximation algorithms for maximin fair division.
\newblock In {\em Proceedings of the 18th ACM Conference on Economics and
  Computation (EC)}, pages 647--664, 2017.

\bibitem{barman2018finding}
Siddharth Barman, Sanath~Kumar Krishnamurthy, and Rohit Vaish.
\newblock Finding fair and efficient allocations.
\newblock In {\em Proceedings of the 2018 ACM Conference on Economics and
  Computation}, pages 557--574, 2018.

\bibitem{berger2021almost}
Ben Berger, Avi Cohen, Michal Feldman, and Amos Fiat.
\newblock (almost full) {EFX} exists for four agents (and beyond).
\newblock {\em arXiv preprint arXiv:2102.10654}, 2021.

\bibitem{bilo2018almost}
Vittorio Bil{\`o}, Ioannis Caragiannis, Michele Flammini, Ayumi Igarashi,
  Gianpiero Monaco, Dominik Peters, Cosimo Vinci, and William~S. Zwicker.
\newblock Almost envy-free allocations with connected bundles.
\newblock In {\em 9th Innovations in Theoretical Computer Science Conference
  (ITCS)}, pages 305--322, 2018.

\bibitem{bouveret2016characterizing}
Sylvain Bouveret and Michel Lema{\^\i}tre.
\newblock Characterizing conflicts in fair division of indivisible goods using
  a scale of criteria.
\newblock {\em Autonomous Agents and Multi-Agent Systems}, 30(2):259--290,
  2016.

\bibitem{budish2011combinatorial}
Eric Budish.
\newblock The combinatorial assignment problem: Approximate competitive
  equilibrium from equal incomes.
\newblock {\em Journal of Political Economy}, 119(6):1061--1103, 2011.

\bibitem{budish2017course}
Eric Budish, G{\'e}rard~P. Cachon, Judd~B. Kessler, and Abraham Othman.
\newblock Course match: A large-scale implementation of approximate competitive
  equilibrium from equal incomes for combinatorial allocation.
\newblock {\em Operations Research}, 65(2):314--336, 2017.

\bibitem{caragiannis2019envy}
Ioannis Caragiannis, Nick Gravin, and Xin Huang.
\newblock Envy-freeness up to any item with high {N}ash welfare: The virtue of
  donating items.
\newblock In {\em Proceedings of the 20th ACM Conference on Economics and
  Computation}, pages 527--545, 2019.

\bibitem{caragiannis2019unreasonable}
Ioannis Caragiannis, David Kurokawa, Herv{\'e} Moulin, Ariel~D. Procaccia,
  Nisarg Shah, and Junxing Wang.
\newblock The unreasonable fairness of maximum {N}ash welfare.
\newblock {\em ACM Transactions on Economics and Computation (TEAC)},
  7(3):1--32, 2019.

\bibitem{chaudhury2018fair}
Bhaskar~Ray Chaudhury, Yun~Kuen Cheung, Jugal Garg, Naveen Garg, Martin Hoefer,
  and Kurt Mehlhorn.
\newblock On fair division for indivisible items.
\newblock In {\em Proceedings of the 38th IARCS Annual Conference on
  Foundations of Software Technology and Theoretical Computer Science
  (FSTTCS)}, pages 1--17, 2018.

\bibitem{chaudhury2020efx}
Bhaskar~Ray Chaudhury, Jugal Garg, and Kurt Mehlhorn.
\newblock {EFX} exists for three agents.
\newblock In {\em Proceedings of the 21st ACM Conference on Economics and
  Computation (EC)}, pages 1--19, 2020.

\bibitem{chaudhury2021improving}
Bhaskar~Ray Chaudhury, Jugal Garg, Kurt Mehlhorn, Ruta Mehta, and Pranabendu
  Misra.
\newblock Improving {EFX} guarantees through rainbow cycle number.
\newblock {\em arXiv preprint arXiv:2103.01628}, 2021.

\bibitem{chaudhury2020little}
Bhaskar~Ray Chaudhury, Telikepalli Kavitha, Kurt Mehlhorn, and Alkmini
  Sgouritsa.
\newblock A little charity guarantees almost envy-freeness.
\newblock In {\em Proceedings of the 31st Symposium on Discrete Algorithms
  (SODA)}, pages 2658--2672, 2020.

\bibitem{cole2017convex}
Richard Cole, Nikhil Devanur, Vasilis Gkatzelis, Kamal Jain, Tung Mai, Vijay~V
  Vazirani, and Sadra Yazdanbod.
\newblock Convex program duality, {F}isher markets, and {N}ash social welfare.
\newblock In {\em Proceedings of the 18th ACM Conference on Economics and
  Computation (EC)}, pages 459--460, 2017.

\bibitem{cole2018approximating}
Richard Cole and Vasilis Gkatzelis.
\newblock Approximating the {N}ash social welfare with indivisible items.
\newblock {\em SIAM Journal on Computing}, 47(3):1211--1236, 2018.

\bibitem{garg2018approximating}
Jugal Garg, Martin Hoefer, and Kurt Mehlhorn.
\newblock Approximating the {N}ash social welfare with budget-additive
  valuations.
\newblock In {\em Proceedings of the Twenty-Ninth Annual ACM-SIAM Symposium on
  Discrete Algorithms}, pages 2326--2340. SIAM, 2018.

\bibitem{garg2020approximating}
Jugal Garg, Pooja Kulkarni, and Rucha Kulkarni.
\newblock Approximating {N}ash social welfare under submodular valuations
  through (un) matchings.
\newblock In {\em Proceedings of the fourteenth annual ACM-SIAM symposium on
  discrete algorithms}, pages 2673--2687. SIAM, 2020.

\bibitem{garg2020improved}
Jugal Garg and Setareh Taki.
\newblock An improved approximation algorithm for maximin shares.
\newblock In {\em Proceedings of the 21st ACM Conference on Economics and
  Computation}, pages 379--380, 2020.

\bibitem{ghodsi2018fair}
Mohammad Ghodsi, Mohammad~Taghi HajiAghayi, Masoud Seddighin, Saeed Seddighin,
  and Hadi Yami.
\newblock Fair allocation of indivisible goods: Improvements and
  generalizations.
\newblock In {\em Proceedings of the 19th ACM Conference on Economics and
  Computation (EC)}, pages 539--556, 2018.

\bibitem{goldman2015spliddit}
Jonathan~R. Goldman and Ariel~D. Procaccia.
\newblock Spliddit: unleashing fair division algorithms.
\newblock {\em ACM SIGecom Exchanges}, 13(2):41--46, 2014.

\bibitem{kurokawa2018fair}
David Kurokawa, Ariel~D. Procaccia, and Junxing Wang.
\newblock Fair enough: Guaranteeing approximate maximin shares.
\newblock {\em Journal of ACM (JACM)}, 65(2):1--27, 2018.

\bibitem{lee2017apx}
Euiwoong Lee.
\newblock {APX}-hardness of maximizing {N}ash social welfare with indivisible
  items.
\newblock {\em Information Processing Letters}, 122:17--20, 2017.

\bibitem{li2021constant}
Wenzheng Li and Jan Vondr{\'a}k.
\newblock A constant-factor approximation algorithm for {N}ash {S}ocial
  {W}elfare with submodular valuations.
\newblock {\em arXiv preprint arXiv:2103.10536}, 2021.

\bibitem{lipton2004approximately}
Richard~J. Lipton, Evangelos Markakis, Elchanan Mossel, and Amin Saberi.
\newblock On approximately fair allocations of indivisible goods.
\newblock In {\em Proceedings of the 5th ACM {C}onference on Electronic
  {C}ommerce}, pages 125--131, 2004.

\bibitem{mahara2020existence}
Ryoga Mahara.
\newblock Existence of {EFX} for two additive valuations.
\newblock {\em arXiv preprint arXiv:2008.08798}, 2020.

\bibitem{plaut2020almost}
Benjamin Plaut and Tim Roughgarden.
\newblock Almost envy-freeness with general valuations.
\newblock {\em SIAM Journal on Discrete Mathematics}, 34(2):1039--1068, 2020.

\bibitem{Steinhaus}
Hugo Steinhaus.
\newblock The problem of fair division.
\newblock {\em Econometrica}, 16(1):101--104, 1948.

\end{thebibliography}
\bibliographystyle{plainurl}

\appendix 

\section{Non-Degenerate instances}\label{sec: non-degenerate}
In this section, we show that in order to prove the existence of an EFX allocation, we may assume w.l.o.g.~that instances are non-degenerate.
Recall that an instance $I$ is a triple $\langle N, M, \mathcal{V} \rangle $, where $N$ is a set of agents, $M$ is a set of items and $\mathcal{V}=\{v_1,\dots,v_n\}$ is a set of valuation functions.
Let $M=\{g_0,\dots,g_{m-1}\}$ and let $\epsilon > 0$ be a positive real number.
We perturb an instance $I$ to $I_\epsilon =\langle N, M, \mathcal{V}_\epsilon \rangle $, where for any $v_i \in \mathcal{V}$ and any $S\subseteq M$, we define $v'_i \in \mathcal{V}_\epsilon$ by
$$v'_i(S)=v_i(S)+\epsilon\sum_{j:g_j \in S} 2^j.$$
\begin{lemma}\label{lem: non-degenerate}
Let $\delta = \min_{i \in N} \min_{S,T: v_i(S) \neq v_i(T)} |v_i(S)-v_i(T)|$ and let $\epsilon > 0$ such that $\epsilon\cdot 2^{m} < \delta$. Then the following three statements hold.
\begin{itemize}
\item For any  $i\in N$ and $S,T \subseteq M$, $v_i(S) > v_i(T)$ implies $v'_i(S) > v'_i(T)$.
\item $I_\epsilon$ is a non-degenerate instance. 
\item If $X$ is an EFX allocation for $I_\epsilon$ then $X$ is also an EFX allocation for $I$.

\end{itemize}
\end{lemma}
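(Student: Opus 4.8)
The plan is to treat the perturbation as a strictly order-refining tie-breaker that is too small to reverse any genuine strict inequality of the original valuations. Writing $\mathrm{bin}(S)=\sum_{j:\,g_j\in S}2^j$ for the binary encoding of a subset $S\subseteq M$, we have $v'_i(S)=v_i(S)+\epsilon\cdot\mathrm{bin}(S)$. The two elementary facts I would record first are that $\mathrm{bin}$ is injective on subsets (uniqueness of binary representations) and that $0\le\mathrm{bin}(S)\le 2^m-1<2^m$ for every $S$, so that $|\mathrm{bin}(S)-\mathrm{bin}(T)|<2^m$ for all $S,T$. I would also note that $\delta>0$, being the minimum over finitely many strictly positive gaps (agents with a constant valuation simply contribute no constraint).

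For the first bullet, suppose $v_i(S)>v_i(T)$. Since the two values differ, the definition of $\delta$ gives $v_i(S)-v_i(T)\ge\delta$, and then
\[
v'_i(S)-v'_i(T)=\bigl(v_i(S)-v_i(T)\bigr)+\epsilon\bigl(\mathrm{bin}(S)-\mathrm{bin}(T)\bigr)\ge \delta-\epsilon\cdot 2^m>0,
\]
using $\epsilon\cdot 2^m<\delta$ together with the bound above. Hence $v'_i(S)>v'_i(T)$.

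For the second bullet, take $S\neq T$ and split on whether $v_i(S)=v_i(T)$. If the original values are equal, then $v'_i(S)-v'_i(T)=\epsilon\bigl(\mathrm{bin}(S)-\mathrm{bin}(T)\bigr)\neq 0$ by injectivity of $\mathrm{bin}$; if they differ, the first bullet (applied to whichever of $S,T$ has the larger value) already yields $v'_i(S)\neq v'_i(T)$. Either way $I_\epsilon$ is non-degenerate. The third bullet is then immediate from the first, by contraposition: if $X$ is not EFX for $I$, some agent $i$ EFX envies some $j$, i.e.\ there is $h\in X_j$ with $v_i(X_j\setminus h)>v_i(X_i)$; applying the first bullet gives $v'_i(X_j\setminus h)>v'_i(X_i)$, so $i$ also EFX envies $j$ under $I_\epsilon$, and $X$ is not EFX for $I_\epsilon$.

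There is no real obstacle here; the single point to get right is the quantitative choice of $\epsilon$, namely that the total perturbation range $\epsilon\cdot 2^m$ stays strictly below the smallest attainable gap $\delta$ of the original valuations. This is exactly what guarantees that no strict preference is flipped (first bullet) while every tie is broken by the distinct binary weights (second bullet), and the EFX transfer (third bullet) needs nothing beyond the first bullet.
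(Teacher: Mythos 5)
Your proof is correct and takes essentially the same route as the paper's: you preserve strict preferences via the bound $\epsilon\cdot 2^m<\delta$ together with $v_i(S)-v_i(T)\ge\delta$, break ties using the injectivity of the binary encoding $S\mapsto\sum_{j:\,g_j\in S}2^j$, and obtain the EFX statement from the first bullet by contraposition, exactly as in the paper. The only cosmetic difference is that you bound the perturbation by $|\mathrm{bin}(S)-\mathrm{bin}(T)|<2^m$ directly, whereas the paper estimates $\sum_{g_j\in T\setminus S}2^j\le 2^m-1$; the resulting chain of inequalities is the same.
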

The proof of Lemma~\ref{lem: non-degenerate} is almost same as the proof of Lemma~1 in \cite{chaudhury2020efx}.
\begin{proof}[proof of Lemma~\ref{lem: non-degenerate}]
The first statement of the lemma holds from the following: 
\begin{align*}
v'_i(S)-v'_i(T)& = v_i(S)-v_i(T)+\epsilon\left(\sum_{g_j\in S\setminus T} 2^j- \sum_{g_j\in T\setminus S} 2^j\right) \\
& \ge \delta - \epsilon \sum_{g_j\in T\setminus S} 2^j \\
& \ge \delta-\epsilon\cdot(2^{m}-1) \\
& > 0.
\end{align*}

For the second statement of the lemma, consider any two sets $S,T \subseteq M$ such that $S\neq T$.
For any $i \in N$, if $v_i(S)\neq v_i(T)$, we have $v'_i(S)\neq v'_i(T)$ by the first statement of the lemma.
If $v_i(S)=v_i(T)$, we have $v'_i(S)-v'_i(T)=\epsilon\left(\sum_{g_j\in S\setminus T} 2^j- \sum_{g_j\in T\setminus S} 2^j\right) \neq 0$ by the fact that $S\neq T$.

For the final statement of the lemma, let us assume that $X$ is not an EFX allocation for $I$.
Then, there exist a pair of agents $i, j\in N$ and $g\in X_j$ such that $v_i(X_j \setminus g)> v_i(X_i)$.
Now, we have $v'_i(X_j \setminus g)> v'_i(X_i)$ by the first statement of the lemma.
It implies that $X$ is not an EFX allocation for $I_\epsilon$, which is a contradiction.
\end{proof}

\end{document}